\newcounter{lpnumber} \setcounter{lpnumber}{0}
\newcommand{\Comp}{\mathcal{C}}
\newcommand{\vote}{\mathsf{vote}}
\newcommand{\wt}{\mathsf{wt}}
\newtheorem{new-claim}{Claim}
\begin{document}
\title{The Popular Roommates problem}
\author{Telikepalli Kavitha}
\institute{Tata Institute of Fundamental Research, Mumbai, India\\
\email{kavitha@tcs.tifr.res.in}}
\maketitle
\pagestyle{plain}

\begin{abstract}
  We consider the {\em popular matching} problem in a roommates instance $G = (V,E)$ with strict preference lists. While popular matchings always
  exist in a bipartite instance, they need not exist in a roommates instance. The complexity of the popular matching problem in a roommates
  instance has been an open problem for several years and we prove its NP-hardness here.
  A sub-class of max-size popular matchings called {\em dominant matchings} has been well-studied in bipartite graphs. We show that 
  the dominant matching problem in $G = (V,E)$ is also NP-hard and this is the case even when $G$ admits a stable matching.
\end{abstract}

\section{Introduction}
\label{intro}
We consider a matching problem in a graph $G = (V,E)$ (need not be complete) where each vertex $u \in V$ ranks its neighbors in a strict order of preference. Such a
graph $G$ is usually referred to as a {\em roommates instance}. A matching $M$ is stable if there is no {\em blocking edge}
with respect to $M$, i.e., there is no pair $(a,b)$ such that both $a$ and $b$ prefer each other to their respective assignments in $M$.

Stable matchings always exist when $G$ is bipartite~\cite{GS62}, however there are simple roommates instances  that do not admit any stable matching.
The problem of deciding whether a stable matching exists or not in $G$ is  the {\em stable roommates} problem.
There are several polynomial time algorithms~\cite{Irv85,Sub94,TS98} to solve the stable roommates problem.
Here we consider a notion called {\em popularity} that is more relaxed than stability.

\subsection{Popular Matchings}
The notion of popularity was introduced by G\"ardenfors~\cite{Gar75} in 1975. 
We say a vertex $u$ {\em prefers} matching $M$ to matching $M'$ if either (i)~$u$ is matched in $M$
and unmatched in $M'$ or (ii)~$u$ is matched in both $M, M'$ and $u$ prefers $M(u)$ to $M'(u)$. 
For any two matchings $M$ and $M'$, let $\phi(M,M')$ be the number of vertices that prefer $M$ to $M'$.

\begin{definition}
\label{pop-def}
A matching $M$ is {\em popular} if  $\phi(M,M') \ge  \phi(M',M)$ for every matching $M'$ in $G$, 
i.e., $\Delta(M,M') \ge 0$ where $\Delta(M,M') = \phi(M,M') -  \phi(M',M)$.
\end{definition}

Thus there is no matching $M'$ that would defeat a popular matching $M$ in an election between $M$ and $M'$, where each vertex casts a vote
for the matching that it prefers. Since there is no matching where more vertices are {\em better-off} than in a popular matching, a popular matching
can be regarded a ``globally stable matching''.

It is easy to show that every stable matching is popular~\cite{Gar75}. Since popularity is a relaxation of stability, popular matchings may exist
in roommates instances that admit no stable matchings (see the instance on 4 vertices $d_0,d_1,d_2,d_3$ on the left of Fig.~\ref{level1:example}).
Here we are interested in the complexity of the {\em popular roommates}
problem, i.e., the problem of deciding if $G = (V,E)$ admits a popular matching or not. This has been an open problem for almost a
decade~\cite{BIM10} and we show the following result here.\footnote{Very recently, this hardness result also appeared in \cite{GMSZ18} on the arxiv; our results were obtained independently and our proofs are different.}

\begin{theorem}
  \label{main-thm}
  Given a roommates instance $G = (V,E)$ with strict preference lists, the problem of deciding if $G$ admits a popular matching or not is NP-hard.
\end{theorem}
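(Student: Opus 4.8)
The plan is to exhibit a polynomial-time reduction from an NP-complete problem --- I would use a restricted version of satisfiability such as 3-SAT (or a one-in-three variant, to make clause-checking cleaner) --- to the popular roommates problem. The central analytical device will be the LP-dual characterization of popularity: a matching $M$ is popular if and only if there is a witness $\vec{\alpha} \in \{-1,0,+1\}^V$ with $\sum_{u} \alpha_u = 0$, with $\alpha_u = 0$ for every vertex $u$ left unmatched by $M$, with $\alpha_a + \alpha_b = 0$ for every edge $(a,b) \in M$, and with $\alpha_a + \alpha_b \ge \wt_M(a,b)$ for every edge $(a,b) \in E$, where $\wt_M(a,b)$ is the sum of $a$'s $\vote$ comparing $b$ against $M(a)$ and $b$'s $\vote$ comparing $a$ against $M(b)$ (each vote being $+1$, $-1$, or $0$). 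I would use this characterization in both directions: to certify popularity of an explicit matching when the formula is satisfiable, and to rule out popularity of \emph{every} matching when it is not.

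The construction itself would be gadget-based. For each variable $x_i$ I would build a small roommates gadget whose only configurations that can take part in a globally popular matching correspond to the two truth values of $x_i$; the instability needed to make existence nontrivial comes from odd-cycle gadgets of the kind illustrated by $d_0,d_1,d_2,d_3$ in Fig.~\ref{level1:example}, which admit no stable matching but can be absorbed into a popular matching only when their surroundings are ``consistent''. For each clause $C_j$ I would build a clause gadget joined by edges to the variable gadgets of the literals occurring in $C_j$, arranged so that a local odd cycle inside the clause gadget resolves into a popular sub-configuration precisely when at least one of its literals is set to satisfy $C_j$. The preference lists on the connecting edges would be tuned so that a literal set to \emph{false} offers no escape-route to the clause gadget, while a literal set to \emph{true} does.

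Correctness would be argued via the witness characterization. For the forward direction, given a satisfying assignment I would write down the induced matching (truth-consistent configurations in every variable gadget, the satisfied configuration in every clause gadget) and exhibit an explicit $\{-1,0,+1\}$ witness $\vec{\alpha}$ verifying all four dual conditions, certifying popularity. The main obstacle, and the delicate part of the whole argument, is the converse: I must show that if the formula is unsatisfiable then no matching of $G$ is popular. Since popularity is a universally quantified condition over all matchings, this amounts to proving that every candidate matching $M$ either violates some gadget-internal consistency --- so that a single blocking odd cycle can be rerouted to obtain an $M'$ with $\Delta(M,M') < 0$ --- or else encodes a consistent assignment which, by unsatisfiability, must leave some clause gadget in an unresolved odd-cycle state, again yielding a more popular $M'$. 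Controlling these exchange arguments across gadget boundaries, and ensuring that no ``cheating'' matching --- one that ignores the intended gadget semantics --- slips through as popular, is where the careful choice of preference lists and the bookkeeping of $\wt_M$ values must do all the work.
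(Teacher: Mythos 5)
Your high-level plan (a reduction from 1-in-3 SAT, variable and clause gadgets, an explicit $\{0,\pm 1\}$ witness to certify popularity in the satisfiable direction) matches the paper's approach in spirit, but there is a genuine gap at the heart of the converse direction. The witness characterization you invoke as an ``if and only if'' --- $M$ is popular iff there exists $\vec{\alpha} \in \{0,\pm 1\}^V$ with $\sum_u \alpha_u = 0$ covering every edge --- is a theorem about \emph{bipartite} instances (Theorem~\ref{thm:witness} from~\cite{KMN09}, together with the rounding lemma of~\cite{Kav16}); its proof rests on integrality of the fractional perfect matching polytope, which fails in non-bipartite graphs. In a roommates instance only the easy direction survives: a witness certifies popularity, but a popular matching need not admit any witness. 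So the step ``if $\phi$ is unsatisfiable then no matching has a witness, hence no matching is popular'' is unsound when applied directly to $G$, and your fallback --- direct exchange arguments showing that every candidate matching loses to some $M'$ --- is exactly the universally quantified statement to be proven, with no mechanism offered to control arbitrary ``cheating'' matchings.

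The paper closes precisely this gap with a two-stage structure that your proposal lacks. First, the gadget $D = \{d_0,d_1,d_2,d_3\}$, together with making $d_0$ the last choice of every vertex and keeping the total number of vertices odd via $z$, forces --- by a short combinatorial argument with alternating paths through $D$ ending in a blocking edge (Lemma~\ref{new-lemma1}) --- every popular matching of $G$ to match all of $X \cup Y$ and leave exactly the vertex $z$ unmatched. Second, since $z$ is guaranteed unmatched, the residual instance $G_0$ becomes bipartite after splitting $z$ into $z$ and $z'$, yielding $H$, and only there does the LP machinery hold in both directions: every popular matching of $H$ has a $\{0,\pm 1\}$ witness, popular edges are tight (Lemma~\ref{prop0}), and each component of the popular subgraph is uniformly in stable or dominant state (Lemma~\ref{prop1}). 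The hard direction is then carried not by exchange arguments over matchings but by propagating constraints on the witness through the levels: level-3 gadgets must be dominant, level-0 gadgets must be stable, and these squeeze exactly one variable gadget per clause into dominant state (Lemmas~\ref{lemma1}--\ref{lemma3}), i.e., they extract a 1-in-3 satisfying assignment. Without this reduction-to-bipartite step, or some substitute for witness necessity in non-bipartite graphs, the converse direction of your proof cannot be completed along the lines you describe.
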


Popular matchings always exist in a bipartite instance, since stable matchings always exist here.
Popular matchings have been well-studied in bipartite graphs, in particular,
a subclass of max-size popular matchings called {\em dominant matchings} is well-understood~\cite{CK16,HK11,Kav12}.

\begin{definition}
  A popular matching $M$ is dominant in $G$ if $M$ is more popular than any larger matching in $G$, i.e.,  $\Delta(M,M') > 0$ for any matching
  $M'$ such that $|M'| > |M|$.
\end{definition}  

Dominant matchings always exist in a bipartite instance and such a matching can be computed in linear time~\cite{Kav12}.
We consider the dominant matching problem in a roommates instance and show the following result.

\begin{theorem}
  \label{second-thm}
  Given a roommates instance $G = (V,E)$ with strict preference lists, the problem of deciding if $G$ admits a dominant matching or not
  is NP-hard. Moreover, this hardness holds even when $G$ admits a stable matching.
\end{theorem}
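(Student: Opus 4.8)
The plan is to prove Theorem~\ref{second-thm} by a reduction, and the natural starting point is Theorem~\ref{main-thm}: I already know the popular roommates problem is NP-hard, so I would try to leverage that hardness rather than build a new gadget from scratch. The key structural fact I want to exploit is the relationship between dominant matchings and popular matchings. In bipartite graphs, dominant matchings are exactly the max-size popular matchings that survive the ``promotion'' trick of cloning each vertex into a primary copy and a dummy last-resort copy; the standard characterization is that $M$ is dominant in $G$ iff $M$ (suitably extended) is a \emph{stable} matching in an augmented graph where every vertex $u$ gets a dummy neighbor $d(u)$ placed at the bottom of $u$'s list. So my first step is to set up an analogous augmentation for roommates instances and pin down precisely when a dominant matching in the original instance corresponds to a stable matching in the augmented one.

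Concretely, given an arbitrary roommates instance $G$ from the popular-hardness reduction, I would build $G'$ by attaching to each $u \in V$ a private dummy vertex $d(u)$ that $u$ ranks last and for whom $u$ is the unique (hence top) choice. The intended correspondence is: a matching in $G$ extends to a dominant matching of $G'$ precisely when it is popular in $G$ and leaves no augmenting-path slack, while every vertex unmatched in $G$ gets absorbed by its dummy. The first main task is to verify that $G'$ \emph{always} admits a stable matching (match every $u$ to $d(u)$), which immediately gives the ``even when $G$ admits a stable matching'' clause for free once I argue the reduction graph has this property. The second task is the heart of the argument: show that $G'$ has a dominant matching \emph{that is not the all-dummy matching} if and only if $G$ has a popular matching. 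Here I expect to invoke the exchange/voting-duality machinery — assigning votes in $\{+1,-1\}$ to edges of the symmetric difference $M \oplus M'$ and forcing any alternating cycle or path to have nonnegative total vote — to translate popularity in $G$ into a dominance inequality $\Delta > 0$ against all strictly larger matchings in $G'$.

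The main obstacle, and the step I would spend the most care on, is controlling the \emph{size} condition that distinguishes dominance from mere popularity. Dominance requires strict majority against every \emph{larger} matching, and in a non-bipartite setting the usual LP-duality certificate (a fractional assignment of $\pm 1$ votes satisfying the per-edge constraints) is not guaranteed to exist, since the popular-matching polytope can be empty or have odd cycles that break the bipartite witness argument. So I would need to either (i)~design the augmentation so that the dummy vertices impose enough parity/size rigidity that any larger matching is automatically beaten, or (ii)~show that odd components arising from $G$'s preferences cannot sabotage the witness, perhaps by arguing they contribute nonnegatively to $\Delta$. Getting this size bookkeeping exactly right — ensuring a dominant matching of $G'$ forces a \emph{full}-size popular structure back in $G$ and not just a popular matching of smaller size — is where a naive reduction would leak, and I would verify it by a careful case analysis on the components of $M \oplus M'$ (alternating paths vs. even cycles vs. odd cycles) to certify $\Delta(M,M') > 0$ whenever $|M'| > |M|$.

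Finally, I would confirm correctness in both directions and check the stable-matching side condition. For the forward direction, starting from a popular $M$ in $G$ I extend it to $M \cup \{(u,d(u)) : u \text{ unmatched in } M\}$ and argue dominance in $G'$ using the vote certificate; for the reverse, given a dominant $M'$ in $G'$ I restrict to the non-dummy edges and show the result is popular in $G$, using that dominance is at least as strong as popularity. Because the all-dummy matching is always stable in $G'$, the constructed instance satisfies the extra hypothesis of the theorem regardless of whether $G$ itself had a stable matching, so the NP-hardness of popular roommates transfers to dominant matchings even under the stable-admitting restriction, completing the proof.
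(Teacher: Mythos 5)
Your construction has a fatal flaw that makes the reduction vacuous. In $G'$, the all-dummy matching $M_D = \{(u,d(u)) : u \in V\}$ is a \emph{perfect} matching of $G'$, and it is popular: for any matching $N$, the $2k$ vertices of $G$ that $N$ matches along real edges prefer $N$, but their $2k$ dummies are then unmatched in $N$ and prefer $M_D$, while every other vertex weakly prefers $M_D$; hence $\Delta(M_D,N) \ge 0$. Being popular and perfect, $M_D$ is vacuously dominant --- there is no larger matching to beat. So \emph{every} instance $G'$ produced by your reduction admits a dominant matching, the answer to the constructed decision problem is always YES, and it cannot encode the NP-hard popular roommates question. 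Your proposed repair --- asking for a dominant matching \emph{other than} the all-dummy one --- changes the problem: that is no longer the dominant-matching existence problem that Theorem~\ref{second-thm} is about. Separately, your claim that $M_D$ is stable (which was to supply the ``even when $G$ admits a stable matching'' clause for free) is false: every edge $(u,v)$ of $G$ blocks $M_D$, since both endpoints rank their dummies last. In fact a stable matching of $G'$ restricted to real edges is stable in $G$, so $G'$ admits a stable matching iff $G$ does, and the side condition is not obtained either. (Also, the bipartite fact you invoke is misremembered: dominant matchings correspond to stable matchings of a two-level duplication construction~\cite{CK16}, not of the graph with last resorts appended; appending last resorts preserves stable matchings, it does not produce dominant ones.)

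The paper does not attempt a generic reduction from popular to dominant matchings, and the contrast is instructive. It exploits the structure of its own hardness instance: in the graph $G$ of Theorem~\ref{main-thm}, every popular matching matches all vertices except $z$ (Lemma~\ref{new-lemma1}), hence is max-size, hence dominant --- so popular and dominant existence coincide on these instances and the first claim of Theorem~\ref{second-thm} is immediate. For the stable-matching clause, the paper passes to $G_0 = G \setminus D$, which \emph{does} admit stable matchings (they leave $s^c_0,t^c_0$ unmatched for each clause $c$), and proves that a popular matching of $G_0$ is dominant iff it matches all of $X \cup Y$; the existence of such a matching is exactly the condition shown NP-hard in Theorem~\ref{thm:redn}. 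Some device of this kind --- tying dominance to a size condition inside a specific instance that already admits stable matchings --- is what your argument would need, and the dummy-vertex augmentation cannot provide it.
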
  
  
\subsection{Background and Related work}
The first polynomial time algorithm for the stable roommates problem was by Irving~\cite{Irv85} in 1985. 
A characterization of roommates instances that admit stable matchings was given in \cite{Tan91} and 
new polynomial time algorithms for the stable roommates problem were given in \cite{Sub94,TS98}. As mentioned earlier,
G\"ardenfors~\cite{Gar75} introduced the notion of popularity in the stable matching problem in bipartite instances.

Algorithmic questions for popular matchings were initially studied in the one-sided preference lists model: here only one side of the bipartite instance
has preferences over its neighbors. Popular matchings need not always exist in this model and there is an efficient algorithm~\cite{AIKM07} to determine if
a given instance admits one. Popular {\em mixed} matchings always exist here~\cite{KMN09} and
such a mixed matching can be computed in polynomial time via linear programming.

In the stable matching problem in bipartite instances (the two-sided preference lists model), popular matchings always exist and 
a max-size popular matching can be computed efficiently~\cite{HK11,Kav12}. These algorithms always compute dominant matchings --- it was shown in \cite{CK16}
that dominant matchings are essentially stable matchings in a larger bipartite graph.
When ties are allowed in preference lists, 
the problem of deciding if a popular matching problem exists or not is NP-hard~\cite{BIM10,CHK15}.

It was shown in \cite{HK17} that the problem of computing a max-weight popular matching in a roommates instance with edge weights is NP-hard. This was
strengthened in \cite{Kav18} to show that  the problem of computing a max-size popular matching in a roommates instance is NP-hard. An efficient
algorithm was also given in \cite{Kav18} to compute a {\em strongly dominant} matching in a roommates instance. Strongly dominant matchings are a
subclass of dominant matchings; interestingly, in bipartite instances, dominant and strongly dominant are equivalent notions~\cite{Kav12}.

It was shown in \cite{HK13} that every roommates instance $G = (V,E)$ admits a matching whose {\em unpopularity factor} is $O(\log|V|)$ and it is NP-hard to
compute a least unpopularity matching in $G$. The complexity of the popular roommates problem was stated as an open problem in several papers/books~\cite{BIM10,Cseh17,HK13,HK17,Man2013}.

\medskip

\noindent{\bf Techniques.}
We use properties of popular matchings in bipartite instances here
--- in particular, we use the LP framework of popular matchings that was initiated in \cite{KMN09}. Every popular matching $M$ in a bipartite instance $H$
is a max-weight perfect matching in a related graph $\tilde{H}$ and an optimal solution to the dual LP (dual to the max-weight perfect matching LP) is
a {\em witness} to the popularity of $M$. It is known that
a matching in $H$ is popular if and only if it has a witness $\vec{\alpha} \in \{0,\pm 1\}^n$, where $n$ is the number of vertices.

A stable matching in our roommates instance $G$ will correspond to a matching in $H$ with  $\vec{0}$ as a witness
and a strongly dominant matching in $G$ will correspond to a matching in $H$ with a witness
$\vec{\alpha}$ such that $\alpha_u \in \{\pm 1\}$ for all matched vertices $u$. We show a reduction from 1-in-3 SAT to the popular roommates
problem via the problem of deciding if a {\em desired popular matching} exists in the bipartite instance $H$;
such a matching is constrained to have a certain witness in $\{0, \pm 1\}^n$ which will prove its hardness.

\medskip

\noindent{\em Organization of the paper.} Section~\ref{prelims} contains an overview of the LP framework of popular matchings in bipartite instances.
Section~\ref{sec:hardness} outlines the reduction from 1-in-3 SAT to the popular roommates problem and Section~\ref{sec:thm3-proof} has more details of
our hardness reduction. Section~\ref{sec:domn} shows NP-hardness for dominant matchings.

\section{Preliminaries}
\label{prelims}
This section is an overview of the LP framework of popular matchings in bipartite graphs from \cite{KMN09} along with some results
from \cite{Kav16,Kav18}.
Let $H = (A \cup B, E_H)$ be a bipartite instance with strict preference lists and 
let $\tilde{H}$ be the graph $H$ augmented with {\em self-loops}, i.e., it is assumed that every vertex is its own last choice.
Let $M$ be any matching in $H$. Corresponding to $M$,
there is a perfect matching $\tilde{M}$ in $\tilde{H}$ defined as follows: $\tilde{M} = M \cup \{(u,u): u$ is left unmatched in $M\}$. 

We now define an edge weight function $\wt_M$ in $\tilde{H}$. For any vertex $u$ and neighbors $v, v'$ in $\tilde{H}$, let
$\vote_u(v,v')$ be 1 if $u$ prefers $v$ to $v'$,
it is -1 if $u$ prefers $v'$ to $v$, else it is 0 (i.e., $v = v'$). The function $\wt_M$ is defined as follows:
\[ \wt_M(u,v) \ = \ \vote_u(v,\tilde{M}(u)) + \vote_v(u,\tilde{M}(v)) \ \ \ \mathrm{for}\ (u,v) \in E_H.\]

Thus $\wt_M(u,v) \in \{0, \pm 2\}$.
We need to define $\wt_M$ on self-loops as well: for any $u \in V$, let $\wt_M(u,u) = 0$ if $u$ is unmatched in $M$, else
let $\wt_M(u,u) = -1$. Thus $\wt_M(u,u)$ is $u$'s vote for itself versus $\tilde{M}(u)$.

It is easy to see that for any matching $N$ in $H$, $\Delta(N,M) = \wt_M(\tilde{N})$.
Thus $M$ is popular if and only if every perfect matching in $\tilde{H}$ has weight at most 0. Let $n = |A \cup B|$.

\begin{theorem}[\cite{KMN09}]
\label{thm:witness}
  Let $M$ be any matching in $H = (A \cup B, E_H)$. The matching $M$ is popular if and only if there exists a vector $\vec{\alpha} \in \mathbb{R}^n$
  such that $\sum_{u \in A \cup B}\alpha_u = 0$ and
  \begin{eqnarray*}
    \alpha_{a} + \alpha_{b} & \ \ \ge \ \ & \wt_{M}(a,b) \ \ \ \forall\, (a,b)\in E_H\\
    \alpha_u & \ \ \ge \ \ & \wt_M(u,u) \ \ \ \forall\, u\in A \cup B.
  \end{eqnarray*}  
\end{theorem}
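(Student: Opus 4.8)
The plan is to recast popularity as an optimality statement for a max-weight perfect matching problem on $\tilde{H}$ and then read off the witness directly from LP duality. Since $\Delta(N,M) = \wt_M(\tilde{N})$ for every matching $N$, and in particular $\wt_M(\tilde{M}) = \Delta(M,M) = 0$, the matching $M$ is popular if and only if no perfect matching of $\tilde{H}$ has positive weight --- equivalently, if and only if the maximum weight of a perfect matching in $\tilde{H}$ equals exactly $0$ (it is at least $0$ because $\tilde{M}$ already attains $0$). This is the quantity I would compute via linear programming.

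First I would write the natural LP for a max-weight perfect matching in $\tilde{H}$: introduce a variable $x_e \ge 0$ for every edge $e$ of $\tilde{H}$ (ordinary edges and self-loops alike), maximize $\sum_e \wt_M(e)\, x_e$ subject to the covering equalities $\sum_{e \ni u} x_e = 1$ for each $u \in A \cup B$, where a self-loop $(u,u)$ contributes only to the constraint of $u$. The key structural fact I would invoke is that this polytope is integral: once the self-loops are accounted for, it is exactly the matching polytope of the bipartite graph $H$, whose constraint matrix is totally unimodular. Hence the LP optimum coincides with the combinatorial maximum, and by the previous paragraph this value equals $0$ precisely when $M$ is popular.

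Next I would dualize. Because every primal constraint is an equality, the dual has one \emph{free} variable $\alpha_u$ per vertex $u$; the dual reads $\min \sum_{u} \alpha_u$ subject to $\alpha_a + \alpha_b \ge \wt_M(a,b)$ for each edge $(a,b) \in E_H$ and $\alpha_u \ge \wt_M(u,u)$ for each self-loop $(u,u)$. These are exactly the inequalities appearing in the statement, so a vector $\vec{\alpha} \in \mathbb{R}^n$ satisfying them is precisely a dual-feasible point.

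Finally I would close the loop with duality. If $M$ is popular the primal optimum is $0$, so by strong duality there is a dual-feasible $\vec{\alpha}$ with $\sum_u \alpha_u = 0$, giving the desired witness. Conversely, given any dual-feasible $\vec{\alpha}$ with $\sum_u \alpha_u = 0$, weak duality forces the primal optimum to be at most $0$, so every perfect matching of $\tilde{H}$ has weight at most $0$ and $M$ is popular. The step I expect to require the most care is the handling of the self-loops: checking that a self-loop appears with coefficient $1$ in its vertex's constraint, so that the dual inequality is $\alpha_u \ge \wt_M(u,u)$ rather than $2\alpha_u \ge \cdots$, and confirming that the resulting perfect-matching polytope on $\tilde{H}$ is genuinely integral. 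It is this integrality, together with the fact that $\tilde{M}$ already certifies the value $0$, that upgrades the weak-duality bound into the clean equality $\sum_u \alpha_u = 0$.
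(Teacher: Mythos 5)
Your proof is correct and is essentially the paper's own approach: the paper cites \cite{KMN09} and explains that the witness $\vec{\alpha}$ is exactly an optimal solution to the dual of the max-weight perfect matching LP on $\tilde{H}$ with weights $\wt_M$, which is precisely the duality argument you spell out. Your additional care about the self-loop coefficients and the integrality of the perfect-matching polytope of $\tilde{H}$ (equivalently, the matching polytope of bipartite $H$) fills in the details the paper leaves to the citation, and does so correctly.
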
  

The vector $\vec{\alpha}$ will be an optimal solution to the LP that is dual to the max-weight perfect matching LP in $\tilde{H}$
(with edge weight function $\wt_M$). For any popular matching $M$, a vector $\vec{\alpha}$ as given in Theorem~\ref{thm:witness}
will be called a {\em witness} to $M$. The following lemma will be useful to us.

\begin{lemma}[\cite{Kav16}]
  Any popular matching in $H = (A \cup B, E_H)$ has a witness in $\{0,\pm 1\}^n$.
\end{lemma}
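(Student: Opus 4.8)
The plan is to read a witness off LP duality and then push its entries into $\{0,\pm 1\}$ by a norm-reducing exchange. By Theorem~\ref{thm:witness}, a witness for a popular matching $M$ is precisely an optimal solution of the LP dual to the max-weight perfect matching LP in $\tilde H$ under the weight function $\wt_M$; since $M$ is popular, the maximum weight of a perfect matching in $\tilde H$ is $0$ (attained by $\tilde M$, as all its edges have weight $0$), so the dual optimum is $0$, which is the normalization $\sum_u \alpha_u = 0$. First I would use that $\tilde H$ is bipartite: its vertex-edge incidence matrix is totally unimodular, and appending the self-loop columns (each a single unit vector) preserves this. As $\wt_M$ is integral, there is therefore an integral optimal dual solution, i.e.\ an integral witness $\vec\alpha \in \mathbb Z^n$.

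Next I would extract the coarse structure of $\vec\alpha$ from complementary slackness against the optimal primal matching $\tilde M$. Every self-loop $(u,u)$ with $u$ unmatched lies in $\tilde M$ and has $\wt_M(u,u)=0$, forcing $\alpha_u = 0$; every matched edge $(a,b)\in M$ lies in $\tilde M$ and has $\wt_M(a,b)=0$, forcing $\alpha_a = -\alpha_b$. Hence the support of $\vec\alpha$ is confined to matched vertices, which split into $M$-edges carrying opposite values; in particular $\max_u \alpha_u = -\min_u \alpha_u$, call this value $k$, and it suffices to drive $k$ down to $1$.

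The core step is a local exchange. I would take an integral witness minimizing $\sum_u \alpha_u^2$ and assume for contradiction that $k \ge 2$. The natural move is to decrease $\alpha_u$ by $1$ on $\{u : \alpha_u = k\}$ and increase it by $1$ on the matched partners $\{w : \alpha_w = -k\}$: this preserves $\sum_u \alpha_u = 0$ and strictly lowers $\sum_u \alpha_u^2$. Checking feasibility is mostly routine and, crucially, uses $\wt_M \le 2$. Self-loop constraints only relax or stay above $\wt_M(u,u)\le 0$; an edge with both endpoints at level $k$ has slack $2k-\wt_M \ge 2$ and survives the drop of $2$; and an edge incident to a single decreased vertex drops by only $1$, so any such edge that is not tight survives. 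A short parity check shows that the \emph{only} failure mode is a tight edge of weight $2$ running from a level-$k$ vertex to an intermediate-level vertex (since on such an edge $\alpha_x+\alpha_y = \wt_M$ with $\alpha_y > -k$ forces $\wt_M = 2$, a blocking edge of $M$). This is the step I expect to be the main obstacle. To handle it I would enlarge the shift by alternating closure: whenever a tight weight-$2$ edge leaves the decrease set to an intermediate vertex $y$, move $y$ into the increase set and, to keep its $M$-edge tight, move $M(y)$ into the decrease set, propagating along tight edges. The delicate part is to prove that this closure is consistent (no vertex is asked both to increase and to decrease), terminates, yields equal-sized decrease and increase sets so that $\sum_u \alpha_u$ is preserved, and remains feasible — equivalently, that the linear system defining the shift has an integral solution, which is exactly where total unimodularity re-enters. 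Its success would contradict minimality of $\sum_u \alpha_u^2$ and force $k \le 1$, completing the proof.
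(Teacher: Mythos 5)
First, a caveat on the comparison itself: the paper does not prove this lemma --- it is imported wholesale from \cite{Kav16} --- so there is no in-paper proof to measure you against, and I am judging your argument on its own merits. Your opening moves are correct and are the natural way to begin: since $M$ is popular, the primal optimum (max-weight perfect matching in $\tilde H$ under $\wt_M$) is $0$, attained by $\tilde M$; the constraint matrix of $\tilde H$ is a bipartite incidence matrix extended by unit columns for the self-loops, hence totally unimodular, so an integral optimal dual $\vec\alpha \in \mathbb{Z}^n$ exists; and complementary slackness against $\tilde M$ forces $\alpha_u = 0$ on unmatched vertices and $\alpha_a = -\alpha_b$ across matched edges.

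The gap is exactly the step you yourself flag as delicate, and it is not a deferrable detail: it is the entire content of the lemma beyond integrality. Two concrete problems. (a)~Your closure can demand that a vertex both increase and decrease. Take $k=2$ with tight blocking edges $(x,a)$ and $(x',b)$, where $\alpha_x = \alpha_{x'} = 2$, $\alpha_a = \alpha_b = 0$, and $(a,b) \in M$: processing $(x,a)$ puts $a$ into the increase set and $b = M(a)$ into the decrease set, while processing $(x',b)$ puts $b$ into the increase set --- a conflict. This configuration is perfectly consistent with integral dual feasibility, so no amount of LP bookkeeping excludes it; the only way to rule it out is to exhibit a perfect matching of positive $\wt_M$-weight in $\tilde H$ (swap in $(x,a)$ and $(x',b)$ and put self-loops on $M(x)$ and $M(x')$, for weight $2+2-1-1 = 2 > 0$), i.e., to invoke popularity through the forbidden structure ``alternating path with two blocking edges'' of \cite{HK11}. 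Total unimodularity cannot ``re-enter'' to do this job: TU is a statement about integrality of polyhedra, whereas the conflict above lives entirely inside the set of integral dual-feasible vectors. (b)~Your feasibility case-analysis is valid only for the \emph{initial} shift sets, which sit at levels $\pm k$; once the closure recruits intermediate-level vertices, new failure modes appear that you never address --- for instance, a tight weight-$0$ edge joining two level-$0$ vertices that were both recruited into the decrease set loses $2$ and becomes violated. So consistency, termination, and feasibility of your closure all hinge on the structural theory of popular matchings (the forbidden alternating structures of \cite{HK11}), which your proposal never brings in and which is precisely what the actual proof in \cite{Kav16} supplies. As it stands, you have established the existence of an integral witness, but not of a witness in $\{0,\pm 1\}^n$.
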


Call any $e \in E_H$ a {\em popular edge} if there is some popular matching in $H$ that contains $e$.
Let $M$ be a popular matching in $H$ and let $\vec{\alpha} \in \{0,\pm 1\}^n$ be a witness of $M$.

\begin{lemma}[\cite{Kav18}]
\label{prop0}
If $(a,b)$ is a popular edge in $H$, then $\alpha_a + \alpha_b = \wt_M(a,b)$.
If $u$ is an unstable vertex in $H$ then $\alpha_u = 0$ if $u$ is left unmatched in $M$, else $\alpha_u = -1$.
\end{lemma}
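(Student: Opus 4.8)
\noindent\emph{Proof proposal.}
The plan is to derive both statements from LP complementary slackness, exploiting the fact that all popular matchings tie with one another. First I would record the key observation: if $N$ is any popular matching, then $\Delta(N,M) = 0$. Indeed, popularity of $M$ gives $\Delta(M,N) \ge 0$ while popularity of $N$ gives $\Delta(N,M) \ge 0$; since $\Delta(M,N) = -\Delta(N,M)$, both quantities vanish. Combining this with the identity $\Delta(N,M) = \wt_M(\tilde{N})$ recorded in the preliminaries yields $\wt_M(\tilde{N}) = 0$ for every popular matching $N$.

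Next I would set up the slackness argument against the fixed witness $\vec{\alpha}$ of $M$. Since $\tilde{N}$ is a perfect matching in $\tilde{H}$, every vertex of $A \cup B$ is covered exactly once, either by an edge of $N$ or by a self-loop. Summing the dual feasibility inequalities of Theorem~\ref{thm:witness} over the edges and self-loops of $\tilde{N}$ gives
\[ \wt_M(\tilde{N}) \ \le \ \sum_{(a,b) \in N}(\alpha_a + \alpha_b) + \sum_{u \text{ unmatched in } N}\alpha_u \ = \ \sum_{u \in A \cup B}\alpha_u \ = \ 0, \]
where the telescoping uses that each vertex appears exactly once and the last equality is the normalization $\sum_u \alpha_u = 0$. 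Because the left-hand side already equals $0$, every inequality used must be tight: $\alpha_a + \alpha_b = \wt_M(a,b)$ for each $(a,b) \in N$ and $\alpha_u = \wt_M(u,u)$ for each $u$ left unmatched by $N$.

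With this tool in hand the two claims fall out by choosing $N$ appropriately. For the first claim, if $(a,b)$ is a popular edge I would take $N$ to be a popular matching containing it; tightness on the edge $(a,b) \in N$ gives $\alpha_a + \alpha_b = \wt_M(a,b)$. For the second claim I would take $N = S$, a stable matching, which exists in the bipartite instance $H$ and is popular. By the Rural Hospitals theorem the set of vertices left unmatched by a stable matching is invariant, hence coincides with the unstable vertices, so an unstable $u$ is unmatched in $S$ and the self-loop $(u,u)$ lies in $\tilde{S}$. Tightness then gives $\alpha_u = \wt_M(u,u)$, and the definition of $\wt_M$ on self-loops evaluates this to $0$ when $u$ is unmatched in $M$ and to $-1$ when $u$ is matched in $M$.

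The argument is essentially routine LP duality, so there is no deep obstacle; the care lies in the preliminary facts that are invoked rather than re-proved --- that two popular matchings always tie, that every stable matching is popular, and that the unmatched set of a stable matching coincides with the unstable vertices. The one bookkeeping step where a double-count or sign slip would break the tightness conclusion is verifying that $\tilde{N}$ covers each vertex exactly once, so that the dual sum collapses precisely to $\sum_u \alpha_u = 0$; note that integrality of $\vec{\alpha}$ is not needed, since $\wt_M(u,u)$ is already confined to $\{0,-1\}$.
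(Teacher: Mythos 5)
Your proposal is correct: the complementary-slackness argument (popular matchings tie with $M$, so $\wt_M(\tilde{N})=0$ forces every covering inequality along $\tilde{N}$ to be tight, applied once with a popular matching containing $(a,b)$ and once with a stable matching, using the Rural Hospitals theorem to place unstable vertices on self-loops) is exactly the standard proof of this fact. The paper itself gives no proof --- it imports the lemma from \cite{Kav18} --- and your derivation matches the argument used there, so this is essentially the same approach rather than a new route.
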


The popular  subgraph $F_H$ is a useful subgraph of $H$ defined in \cite{Kav18}.

\begin{definition}
\label{def:popular-subgraph}
  The {\em popular subgraph} $F_H = (A \cup B, E_F)$ is the subgraph of $H = (A \cup B, E_H)$
whose edge set $E_F$ is the set of popular edges in $E_H$.
\end{definition}

The graph $F_H$ need not be connected. Let $\Comp_1,\ldots,\Comp_h$ be the various components in $F_H$.

\begin{lemma}[\cite{Kav18}]
\label{prop1}
For any connected component $\Comp_i$ in $F_H$, either $\alpha_u = 0$ for all vertices $u \in \Comp_i$ or 
$\alpha_u \in \{\pm 1\}$ for all vertices $u \in \Comp_i$. Moreover, if $\Comp_i$ contains one or 
more unstable vertices, either all these unstable vertices are matched in $M$ or none of them is 
matched in $M$.
\end{lemma}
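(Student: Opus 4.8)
The plan is to propagate information across popular edges via Lemma~\ref{prop0} and then invoke connectivity; the whole argument rests on a single parity observation, which I would isolate first.

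\emph{The key step I would establish is a parity argument on popular edges.} Let $(a,b)$ be any edge of the component $\Comp_i$. Since the edges of $F_H$ are precisely the popular edges of $H$, Lemma~\ref{prop0} applies and gives $\alpha_a + \alpha_b = \wt_M(a,b)$. Recall that $\wt_M(a,b) \in \{0,\pm 2\}$, so $\alpha_a + \alpha_b$ is even. But $\vec{\alpha} \in \{0,\pm 1\}^n$ forces $\alpha_a, \alpha_b \in \{0,\pm 1\}$, and among such pairs an even sum is attainable only when both coordinates are $0$ or both are nonzero; the ``mixed'' choices $0 + (\pm 1)$ yield the odd value $\pm 1$ and are excluded. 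Hence along every edge of $\Comp_i$ we have $\alpha_a = 0$ if and only if $\alpha_b = 0$.

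\emph{The first claim then follows from connectivity.} The predicate ``$\alpha_u = 0$'' is invariant along every edge of $\Comp_i$ by the parity step, and $\Comp_i$ is connected, so this predicate has a constant truth value on $\Comp_i$: either $\alpha_u = 0$ for all $u \in \Comp_i$, or $\alpha_u \neq 0$ for all $u \in \Comp_i$. In the second case the bound $\alpha_u \in \{0,\pm 1\}$ together with $\alpha_u \neq 0$ gives $\alpha_u \in \{\pm 1\}$, which is exactly the stated dichotomy. For the second claim I would combine this dichotomy with the unstable-vertex part of Lemma~\ref{prop0}: for an unstable vertex $u$ that lemma says $\alpha_u = 0$ precisely when $u$ is left unmatched in $M$, and $\alpha_u = -1$ (in particular $\alpha_u \neq 0$) precisely when $u$ is matched in $M$. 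So if $\Comp_i$ contains at least one unstable vertex and all its vertices have $\alpha_u = 0$, then every unstable vertex in $\Comp_i$ is unmatched, whereas if all its vertices have $\alpha_u \in \{\pm 1\}$, then every unstable vertex has $\alpha_u = -1$ and is matched. Either way the unstable vertices of $\Comp_i$ are simultaneously matched or simultaneously unmatched.

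I expect no serious obstacle here once Lemma~\ref{prop0} is available: the substance of the proof is entirely the parity observation, and connectivity together with the unstable-vertex characterization then does the rest mechanically. The one point I would keep in mind is that the components of $F_H$ are built from ordinary popular edges in $E_H$ and not from the self-loops of $\tilde{H}$, so the parity step applies verbatim to each edge of $\Comp_i$ and the self-loops play no role in the propagation.
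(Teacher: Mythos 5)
Your proof is correct. Note that the paper does not prove this lemma at all --- it is imported verbatim from \cite{Kav18} --- and your argument (tightness of popular edges from Lemma~\ref{prop0}, the parity observation that $\wt_M \in \{0,\pm 2\}$ forces both endpoints of a popular edge to be zero or both nonzero, propagation by connectivity, and the matched/unmatched dichotomy for unstable vertices from Lemma~\ref{prop0}) is essentially the same proof given in that cited source, so your reconstruction matches the intended one.
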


The following definition marks the state of each connected component  $\Comp_i$ in $F_H$ as ``stable'' or ``dominant'' in $\vec{\alpha}$
--- this classification will be useful to us in our hardness reduction.

\begin{definition}
  \label{def:stab-domn}
A connected component $\Comp_i$ in $F_H$ is in {\em stable state} in $\vec{\alpha}$ if  $\alpha_u = 0$ for all vertices $u \in \Comp_i$.
Similarly, $\Comp_i$ in $F_H$ is in {\em dominant state} in $\vec{\alpha}$ if  $\alpha_u \in \{\pm 1\}$ for all vertices $u \in \Comp_i$.
\end{definition}

\section{Hardness of the popular roommates problem}
\label{sec:hardness}

Our reduction will be from 1-in-3 SAT. Recall that 1-in-3 SAT is the set of 3CNF formulas with no negated variables such that there is a
satisfying assignment that makes {\em exactly one} variable true in each clause. Given an input formula $\phi$, to determine if $\phi$ is
1-in-3 satisfiable or not is NP-hard~\cite{Sch78}.

We will now build a roommates instance $G = (V, E)$. The vertex set $V$ will consist of vertices in 4 levels: levels~0, 1, 2, and 3 along with
5 other vertices $d_0,d_1,d_2,d_3$, and $z$. The vertices $d_0,d_1,d_2,d_3$ form a gadget $D$ (on the left of Fig.~\ref{level1:example}) described below.
The vertex $d_0$ will be the last choice neighbor of vertex $v$ for every $v \in V \setminus \{z\}$.

Vertices in $V \setminus \{d_0,d_1,d_2,d_3,z\}$ are partitioned into gadgets that appear in some level~$i$, for $i \in \{0,1,2,3\}$.
Every edge $(u,v)$ in $G$ where $u, v$ are in $V \setminus \{d_0,z\}$ is either inside a gadget
or between 2 gadgets in consecutive levels. We describe these gadgets below.

\medskip

\noindent{\em Level~1 vertices.} Every gadget in level~1 is a variable gadget.
Corresponding to each variable $X_i$, we will have the gadget on the right of Fig.~\ref{level1:example}. 
The preference lists of the 4 vertices in the gadget corresponding to $X_i$ are as follows:

\begin{minipage}[c]{0.45\textwidth}
			
			\centering
			\begin{align*}
			        &x_i\colon \, y_i \succ y'_i \succ  z \succ \cdots   \qquad\qquad &&  y_i\colon \, x_i \succ x'_i \succ z \succ \cdots \\
                                &x'_i\colon \, y_i \succ y'_i \succ \cdots   \qquad\qquad &&  y'_i\colon \, x_i \succ x'_i \succ \cdots\\
			\end{align*}
\end{minipage}

\smallskip

The vertices in the gadget corresponding to $X_i$ are also adjacent to vertices in the ``clause gadgets'' corresponding to $X_i$:
these neighbors belong to the ``$\cdots$'' part of the preference lists. Note that the order among the vertices in the ``$\cdots$'' part
in the above preference lists does not matter.  Also, $d_0$ is the last choice of each of the above vertices.

\begin{figure}[h]
\centerline{\resizebox{0.46\textwidth}{!}{\input{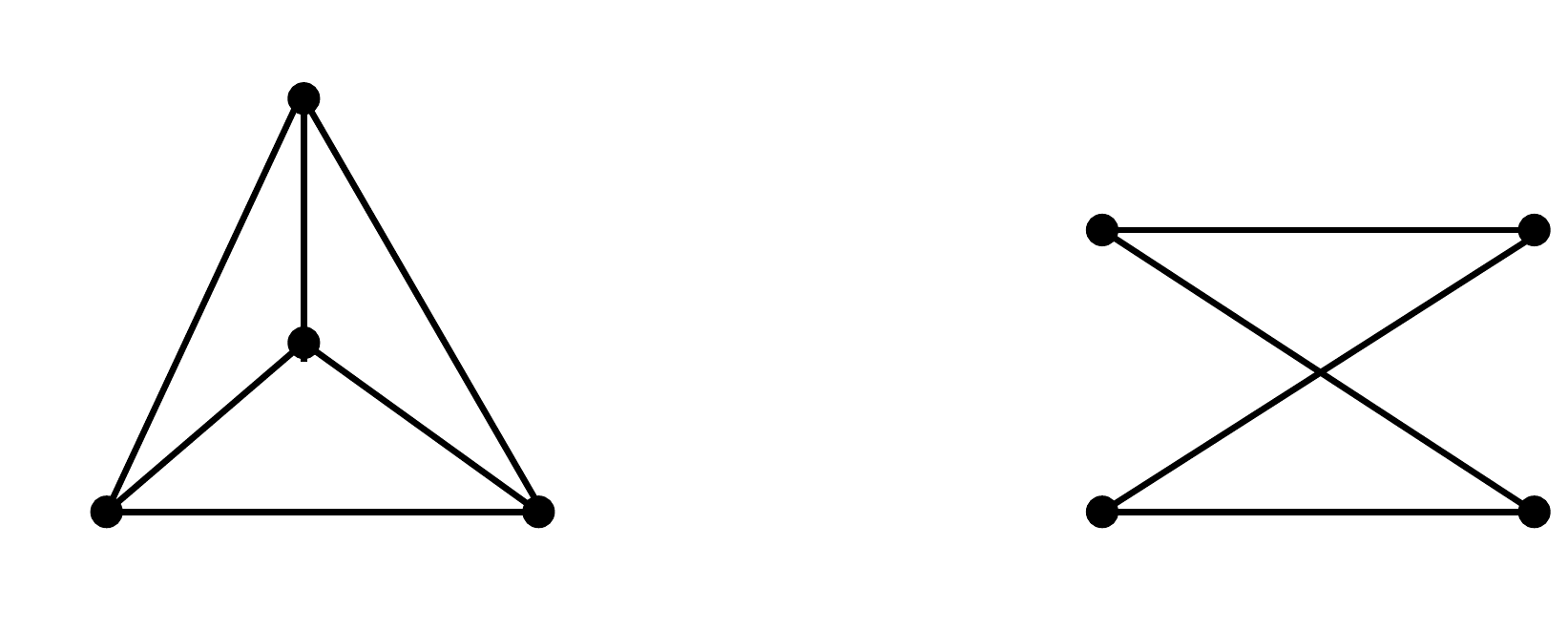_t}}}
\caption{Each of $d_1,d_2,d_3$ is a top choice neighbor for another vertex here and $d_0$ is the last choice of $d_1,d_2,d_3$.
On the right is the gadget corresponding to variable $X_i$: vertex preferences are indicated on edges.}
\label{level1:example}
\end{figure}

\noindent{\em The gadget $D$.}
There will be 4 vertices $d_0,d_1,d_2,d_3$ that form the gadget $D$ (see the left of Fig.~\ref{level1:example}).
The preferences of vertices in $D$ are given below.

\begin{minipage}[c]{0.45\textwidth}
			
			\centering
			\begin{align*}
				&d_1\colon \, d_2  \succ d_3 \succ d_0  \qquad\qquad && d_2\colon \, d_3  \succ d_1 \succ d_0\\
			        &d_3\colon \, d_1 \succ d_2 \succ d_0  \qquad\qquad && d_0\colon \, d_1 \succ d_2 \succ d_3 \succ \cdots \\
			\end{align*}
\end{minipage}

The vertex $d_0$ will be adjacent to all vertices in $G$ other than $z$. The order of other neighbors in $d_0$'s preference list does not matter.
Let $c = X_i \vee X_j \vee X_k$ be a clause in $\phi$. We will describe the gadgets that correspond to $c$.

\begin{figure}[h]
\centerline{\resizebox{0.74\textwidth}{!}{\input{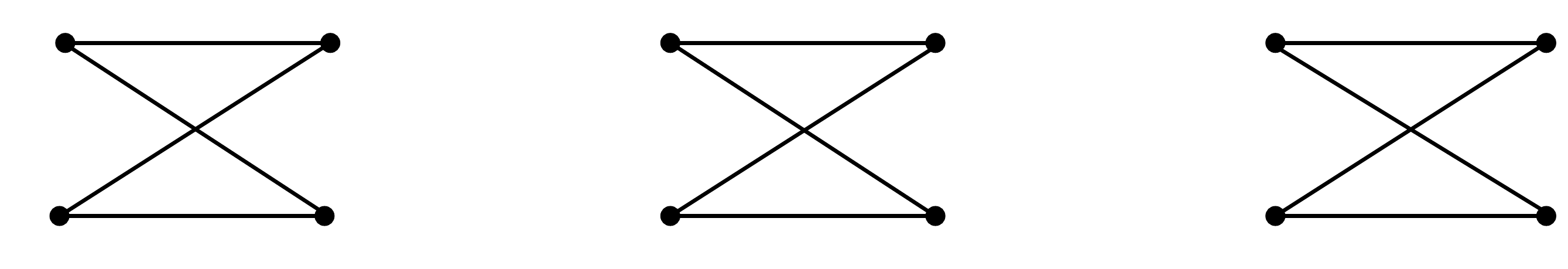_t}}}
\caption{Corresponding to clause $c =  X_i \vee X_j \vee X_k$ we have the above 3 gadgets in level 0. The vertex $a^c_1$'s second choice
  is $y'_j$ and $b^c_1$'s is $x'_k$, similarly, $a^c_3$'s is $y'_k$ and $b^c_3$'s is $x'_i$, also $a^c_5$'s is $y'_i$ and $b^c_5$'s is $x'_j$.}
\label{level0:example}
\end{figure}

\medskip

\noindent{\em Level~0 vertices.} 
There will be three level~0 gadgets, each on 4 vertices, corresponding to clause $c$. See Fig.~\ref{level0:example}.
We describe below the preference lists of the 4 vertices $a^c_1,b^c_1,a^c_2,b^c_2$ that belong to the leftmost gadget.
For the sake of readability, we have dropped the superscript $c$ from these vertices.

\begin{minipage}[c]{0.45\textwidth}
			
			\centering
			\begin{align*}
			        &a_1\colon \, b_1 \succ \underline{y'_j} \succ b_2 \succ \underline{z}  \qquad\qquad && b_1\colon \, a_2 \succ \underline{x'_k} \succ a_1 \succ \underline{z} \\
                                &a_2\colon \, b_2 \succ b_1  \qquad\qquad && b_2\colon \, a_1 \succ a_2 \\
			\end{align*}
\end{minipage}

Though $d_0$ is not explicitly listed in the above preference lists,
recall that $d_0$ is the last choice of each of these vertices. Neighbors that are outside this gadget are underlined.
The preferences of vertices in the other 2 gadgets in level~0 corresponding to $c$ ($a^c_t,b^c_t$ for $t = 3,4$ and $a^c_t,b^c_t$ for $t = 5,6$) are analogous.
We will now describe the three level~2 gadgets corresponding to clause $c$. See Fig.~\ref{level2:example}.

\begin{figure}[h]
\centerline{\resizebox{0.8\textwidth}{!}{\input{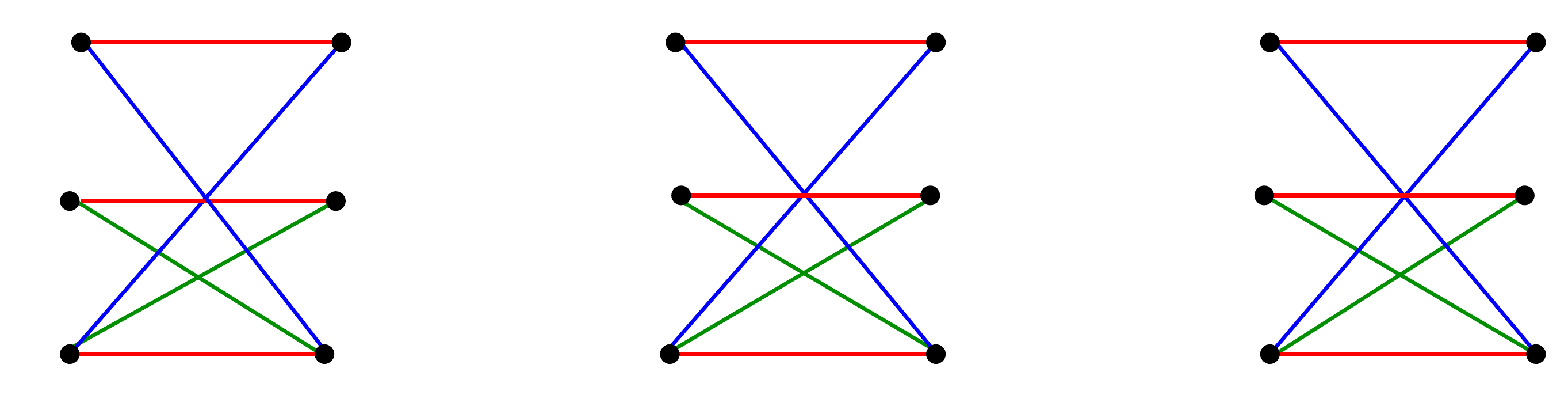_t}}}
\caption{We have the above 3 gadgets in level 2 corresponding to $c =  X_i \vee X_j \vee X_k$ . The vertex $p^c_2$'s second choice is
   $y_j$ and $q^c_2$'s is $x_k$, similarly, $p^c_5$'s is $y_k$ and $q^c_5$'s is $x_i$, similarly $p^c_8$'s is $y_i$ and $q^c_8$'s is $x_j$.}
\label{level2:example}
\end{figure}

\noindent{\em Level~2 vertices.}
There will be three level~2 gadgets, each on 6 vertices, corresponding to clause $c$.
The preference lists of the vertices $p^c_t,q^c_t$ for $0 \le t \le 2$ are described below.
Also, $d_0$ is the last choice of each of these vertices.
For the sake of readability, we have again dropped the superscript $c$ from these
vertices. 

\begin{minipage}[c]{0.45\textwidth}
			
			\centering
			\begin{align*}
			        &p_0\colon \, q_0 \succ q_2  \qquad\qquad && q_0\colon \, p_0 \succ p_2 \succ \underline{z} \succ \underline{s_0}\\
                                &p_1\colon \, q_1 \succ q_2 \succ \underline{z}  \qquad\qquad && q_1\colon \, p_1 \succ p_2 \\
                                &p_2\colon \, q_0 \succ \underline{y_j} \succ q_1 \succ q_2  \qquad\qquad && q_2\colon \, p_1 \succ \underline{x_k} \succ p_0 \succ p_2 \\
			\end{align*}
\end{minipage}

Let us note the preference lists of $p_2$ and $q_2$:
they are each other's fourth choices.
The vertex $p_2$ regards $q_0$ as its top choice, $y_j$ as its second choice, and $q_1$ as its third choice.
The vertex $q_2$ regards $p_1$ as its top choice, $x_k$ as its second choice, and $p_0$ as its third choice.

The preferences of vertices in the other 2 gadgets in level~2 corresponding to $c$ ($p^c_t,q^c_t$ for $3 \le t \le 5$ and
$p^c_t,q^c_t$ for $6 \le t \le 8$) are analogous to the above preference lists.
The vertex $s_0$ that appears in $q_0$'s preference list is a vertex from the level~3 gadget corresponding to clause $c$.
Note that  $s_0$ also appears in $q_3$'s preference list and the vertex $t_0$ appears in the preference lists of $p_4$ and $p_7$.

\medskip

\noindent{\em Level~3 vertices.}
Gadgets in level~3 are again clause gadgets. There is exactly one level~3 gadget on 8 vertices $s^c_i,t^c_i$, for $0 \le i \le 3$,
corresponding to clause $c$.  As before, $d_0$ is the last choice of each of these vertices.

\begin{minipage}[c]{0.45\textwidth}
			
			\centering
			\begin{align*}
				&s_0\colon \, t_1  \succ \underline{q_0} \succ t_2\succ \underline{q_3} \succ t_3 \qquad\qquad && t_0\colon \, s_3  \succ \underline{p_7} \succ s_2\succ  \underline{p_4} \succ s_1 \\
			        &s_1\colon \, t_1 \succ t_0  \qquad\qquad && t_1\colon \, s_1 \succ s_0 \\
                                &s_2\colon \, t_2 \succ t_0  \qquad\qquad && t_2\colon \, s_2 \succ s_0 \\
                                &s_3\colon \, t_3 \succ t_0  \qquad\qquad && t_3\colon \, s_3 \succ s_0 \\
			\end{align*}
\end{minipage}

The preference lists of the 8 vertices in the level~3 gadget corresponding to clause $c$ are described above.
For the sake of readability, we have again dropped the superscript $c$ from these vertices. 

It is important to note the preference lists of $s_0$ and $t_0$ here.
Among neighbors in this gadget, $s_0$'s order is $t_1 \succ t_2 \succ t_3$ while
$t_0$'s order is $s_3 \succ s_2 \succ s_1$. Also, $s_0$'s order is interleaved with $q_0 \succ q_3$ (these are vertices from level~2 gadgets) and
$t_0$'s order is interleaved with $p_7 \succ p_4$.

\medskip

There is one more vertex in $G$. This is the vertex $z$, the neighbors of $z$ are $\cup_i \{x_i,y_i\} \cup_c \{a^c_{2i-1},b^c_{2i-1}: i = 1,2,3\} \cup_c \{p^c_{3j+1},q^c_{3j}: j=0,1,2\}$.
The preference order of these neighbors in $z$'s preference list is as follows:
\[z\colon \, x_1  \succ y_1 \succ \cdots \succ x_{n_0} \succ y_{n_0} \succ a^{c_1}_1 \succ b^{c_1}_1 \succ \cdots \]
Here $n_0$ is  the number of variables in $\phi$. Note that
$z$ prefers any neighbor in a level~1 gadget to other neighbors. 

Thus the vertex set $V$ is $\{z\} \cup \{d_0,d_1,d_2,d_3\} \cup_{i=0}^3 \{\mathrm{level}\ i \ \mathrm{vertices}\}$.
We will partition the set $\cup_{i=0}^3\{\mathrm{level}\ i \ \mathrm{vertices}\}$ into $X \cup Y$ where
\begin{eqnarray*}
X & = & \cup_i\{x_i,x'_i\} \cup_c \{a^c_1,\ldots,a^c_6,p^c_0,\ldots,p^c_8,s^c_0,\ldots,s^c_3\}\\
Y & = & \cup_i\{y_i,y'_i\} \cup_c \{b^c_1,\ldots,b^c_6,q^c_0,\ldots,q^c_8,t^c_0,\ldots,t^c_3\}.
\end{eqnarray*}

\begin{lemma}
    \label{new-lemma1}
    For any popular matching $M$ in $G$, the following properties hold:
    \begin{itemize}
    \item[(1)] either $\{(d_0,d_1), (d_2,d_3)\} \subset M$ or $\{(d_0,d_2), (d_1,d_3)\} \subset M$.
    \item[(2)] $M$ matches all vertices in $X \cup Y$.
    \end{itemize}  
\end{lemma}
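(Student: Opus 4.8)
The plan is to prove both parts by purely local exchange arguments on the gadget $D=\{d_0,d_1,d_2,d_3\}$, using two structural facts about the construction: $d_1,d_2,d_3$ have no neighbours outside $D$, and $d_0$ is the last-choice neighbour of every vertex of $X\cup Y$ (and is adjacent to all of them). Throughout I use that if $M'$ agrees with $M$ except on a vertex set $S$, then $\Delta(M',M)$ is just the signed vote count over $S$, since every vertex outside $S$ keeps its partner and contributes $0$.

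For part~(1) I would enumerate how a matching $M$ can restrict to $D$. Since each of $d_1,d_2,d_3$ can only be matched inside $D$, the restriction is determined by $d_0$'s partner (one of $d_1,d_2,d_3$, an external vertex $w\in X\cup Y$, or none) together with the at most one triangle edge used. Writing $M_A=\{(d_0,d_1),(d_2,d_3)\}$, $M_B=\{(d_0,d_2),(d_1,d_3)\}$, and $M_C=\{(d_0,d_3),(d_1,d_2)\}$, I would show that for every restriction other than $M_A$ and $M_B$ there is a competitor obtained by editing only $D$ (and at most freeing $d_0$'s external partner $w$) that strictly wins the election. The cyclic preferences on the triangle make the computation uniform: $M_A$ and $M_B$ tie each other, while each of them beats $M_C$, beats every configuration in which a triangle vertex is left unmatched, and beats every configuration in which $d_0$ is matched outside $D$ (there the freed $w$ casts the only opposing vote, which is outweighed). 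Since these edits touch only $d_0,d_1,d_2,d_3$ and possibly $w$, the sign of $\Delta$ is decided by these few votes, and popularity of $M$ forces $M\cap D\in\{M_A,M_B\}$. In particular $d_0$ is matched inside $D$ in every popular matching.

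For part~(2), by part~(1) the vertex $d_0$ is matched to $d_1$ or $d_2$; suppose for contradiction that some $u\in X\cup Y$ is unmatched. Since $u\neq z$, the edge $(u,d_0)$ exists and $d_0$ ranks $u$ after $d_3$. I would give $u$ the vertex $d_0$ and absorb the displacement through the triangle's blocking edge. If $M\cap D=M_A$ I take $M'=M-(d_0,d_1)-(d_2,d_3)+(u,d_0)+(d_1,d_3)$; if $M\cap D=M_B$ I take $M'=M-(d_0,d_2)-(d_1,d_3)+(u,d_0)+(d_1,d_2)$. In each case only $u,d_0,d_1,d_2,d_3$ change partners: $u$ is promoted from unmatched and the two rematched triangle vertices each improve, against $d_0$'s demotion to an external vertex and a single freed triangle vertex, giving $\Delta(M',M)=3-2=1>0$ and contradicting popularity of $M$.

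The main obstacle is the completeness and bookkeeping of the case analysis in part~(1): every partial restriction of $M$ to $D$ (especially those with $d_0$ matched outside) must be listed and strictly dominated, and the correct dominating matching ($M_A$ versus $M_B$) chosen in each case. Once that is in place, part~(2) is immediate, because the very blocking edge that made the triangle unstable is exactly what lets $u$ claim $d_0$ while still increasing the total number of satisfied voters.
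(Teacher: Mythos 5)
Your proposal follows essentially the same route as the paper. Part~(2) is the paper's argument verbatim: the paper flips the alternating path $(v,d_0)$--$(d_0,d_1)$--$(d_1,d_3)$ (resp.\ $(v,d_0)$--$(d_0,d_2)$--$(d_2,d_1)$) through the blocking edge, obtaining exactly your $3-2$ vote count. For part~(1) the paper takes a shortcut that spares most of your bookkeeping: since each of $d_1,d_2,d_3$ is the top choice of another vertex of $D$, no popular matching can leave any of them unmatched, and as they have no neighbours outside $D$ this forces all four of $d_0,\dots,d_3$ to be matched among themselves; only the three perfect configurations $M_A,M_B,M_C$ survive, and $M_C$ is eliminated because $M_A$ beats it $3$--$1$. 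Your fuller enumeration (configurations with unmatched vertices or with $d_0$ matched externally) is correct in outcome but strictly unnecessary once that observation is made.

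One caution: two of your ``uniform'' claims are false as stated, even though they are not load-bearing. First, it is \emph{not} true that each of $M_A,M_B$ beats $M_C$: the comparison $M_B$ versus $M_C$ is a $2$--$2$ tie ($d_1$ and $d_2$ both prefer $M_C$, since $d_1$ gets its top choice $d_2$ there and $d_2$ prefers $d_1$ to $d_0$); only $M_A$ beats $M_C$. Second, in the configuration $\{(d_0,w),(d_1,d_2)\}$ with $d_3$ free, the freed vertex $w$ is \emph{not} the only opposing vote against $M_A$: the vertex $d_1$ also opposes (it is demoted from $d_2$ to $d_0$), so the margin is $3$--$2$, not $4$--$1$; and here $M_B$ actually loses $2$--$3$, so the competitor really must be chosen per case. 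Neither error sinks the proof --- in every configuration other than $M_A$ and $M_B$ at least one of $M_A,M_B$ strictly wins, which is all you need, and your own hedge (``the correct dominating matching chosen in each case'') covers this --- but the case analysis cannot be made as uniform as your sketch suggests, which is precisely why the paper's shortcut via the top-choice observation is the cleaner way to organize part~(1).
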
      
\begin{proof}
  Since each of $d_1,d_2,d_3$ is a top choice neighbor for some vertex in $G$, a popular matching in $G$ cannot leave any of these 3 vertices unmatched.
  Since these 3 vertices have no neighbors outside themselves other than $d_0$,
  a popular matching has to match $d_0$ to one of  $d_1,d_2,d_3$.
  Thus $d_0,d_1,d_2,d_3$ are matched among each other in $M$.

  The only possibilities for $M$ when restricted to $d_0,d_1,d_2,d_3$ are the pair of edges $(d_0,d_1), (d_2,d_3)$ or $(d_0,d_2), (d_1,d_3)$.
  The third possibility
  $(d_0,d_3),(d_1,d_2)$ is ``less popular than''  $(d_0,d_1),(d_2,d_3)$ as $d_0,d_2$, and $d_3$ prefer the latter to the former.
  This proves part~(1) of the lemma.

  \smallskip
  
    Consider any vertex $v \in X \cup Y$. If $v$ is left unmatched in $M$ then we either have an alternating path $\rho_1 = (v,d_0)$-$(d_0,d_1)$-$(d_1,d_3)$
    or  an alternating path $\rho_2 = (v,d_0)$-$(d_0,d_2)$-$(d_2,d_1)$ with respect to $M$: in each of these alternating paths, the starting vertex $v$ is
    unmatched in $M$, the middle edge belongs to $M$, and the third edge is a {\em blocking edge} with respect to $M$.

    Suppose $\rho_1$ is an alternating path with respect to $M$. Consider $M \oplus \rho_1$ versus $M$: the vertices $v,d_1,d_3$ prefer $M \oplus \rho_1$
    to $M$ while  $d_0$ and $d_2$ prefer $M$ to $M \oplus \rho_1$; the other vertices are indifferent between $M$ and $M \oplus \rho_1$.
    Thus $M \oplus \rho_1$ is more popular than $M$, a contradiction to $M$'s popularity. Similarly, if $\rho_2$ is an alternating path with respect to $M$
    then $M \oplus \rho_2$ is more popular than $M$. Hence every vertex in $X \cup Y$ has to be matched in $M$. This proves part~(2). \qed
\end{proof}

Since the total number of vertices in $G$ is odd, at least 1 vertex has to be left unmatched in any matching in $G$.
Lemma~\ref{new-lemma1} implies that the vertex $z$ will be left unmatched in $M$.

Let $G_0$ be the subgraph of $G$ induced on $X \cup Y \cup \{z\}$. The matching $M$ restricted to $G_0$ has to be popular on $G_0$, otherwise it would
contradict the popularity of $M$ in $G$. We will now show the following converse of Lemma~\ref{new-lemma1}.

\begin{lemma}
    \label{new-lemma2}
    If $G_0$ admits a popular matching that matches all vertices in $X \cup Y$ then $G$ admits a popular matching.
\end{lemma}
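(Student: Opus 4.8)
The plan is to extend the given popular matching on $G_0$ to all of $G$ by adding a fixed matching on the gadget $D$, and then verify popularity directly from the definition of $\Delta$. Let $M_0$ be a popular matching of $G_0$ that matches every vertex of $X\cup Y$. Since $M_0$ matches all of $X \cup Y$ and $G_0$ has an odd number of vertices (because $|V|$ is odd and $|D|=4$), the only vertex of $G_0$ outside $X\cup Y$, namely $z$, must be left unmatched. Define $M = M_0 \cup \{(d_0,d_1),(d_2,d_3)\}$; this is a matching of $G$ (the vertices $d_0,\ldots,d_3$ lie outside $G_0$) that leaves only $z$ unmatched. I claim $M$ is popular in $G$, i.e.\ $\Delta(M,N)\ge 0$ for every matching $N$ of $G$.

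Fix an arbitrary matching $N$ of $G$ and recall that $\Delta(M,N) = \sum_{v\in V}\delta_v$, where $\delta_v \in \{0,\pm1\}$ records whether $v$ prefers its partner in $M$, in $N$, or is indifferent (an unmatched vertex being treated as matched to its worst option). Let $\hat N$ be the restriction of $N$ to the edges of $G_0$, i.e.\ $\hat N$ discards every $N$-edge incident to a vertex of $D$; this is a matching of $G_0$. The key observation is that the votes of the vertices in $(X\cup Y)\cup\{z\}$ are identical in the two elections $M$ versus $N$ (in $G$) and $M_0$ versus $\hat N$ (in $G_0$): indeed $M$ and $M_0$ assign these vertices the same partners, and the only way $N$ and $\hat N$ can differ at such a vertex is when $N$ matches some $v\in X\cup Y$ to $d_0$ while $\hat N$ leaves $v$ unmatched; since $d_0$ is the last choice of $v$ and its $M_0$-partner is a genuine (hence strictly better) neighbour, $v$ casts $+1$ in both elections. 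Hence
\[ \Delta(M,N) \;=\; \Delta_{G_0}(M_0,\hat N) \;+\; \sum_{v\in D}\delta_v, \]
where $\Delta_{G_0}$ denotes $\Delta$ evaluated inside $G_0$; there is no double counting, since each interface edge $(d_0,v)$ contributes $v$'s vote to the first term and $d_0$'s vote to the second.

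It remains to show both terms are non-negative. The first term is $\ge 0$ because $M_0$ is popular in $G_0$. For the second, note that $\delta_v$ for $v\in D$ depends only on how $N$ matches $\{d_0,d_1,d_2,d_3\}$, that $d_1,d_2,d_3$ have no neighbours outside $D$, and that $d_0$ prefers its $M$-partner $d_1$ to every vertex outside $D$. A short case analysis over the possibilities for $N$ on $D$ (either $d_0$ is matched inside $D$, or $d_0$ is matched to some $v\in X\cup Y$ or left unmatched with $d_1,d_2,d_3$ then matched among themselves in the triangle) shows $\sum_{v\in D}\delta_v\ge 0$ in every case; this is analogous to the reasoning in the proof of Lemma~\ref{new-lemma1} and shows that $\{(d_0,d_1),(d_2,d_3)\}$ is at least as popular as any alternative on $D$ (even though it is \emph{not} stable there, as $(d_1,d_3)$ is a blocking edge). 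Combining the two bounds gives $\Delta(M,N)\ge 0$ for all $N$, so $M$ is popular and $G$ admits a popular matching.

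The step I expect to be the main obstacle is establishing the vote-preservation identity cleanly --- in particular arguing that re-routing $N$ to $\hat N$ across the $d_0$-interface changes no vote of an $X\cup Y$ vertex, which relies essentially on $d_0$ being the universal last choice, so that ``matched to $d_0$'' and ``unmatched'' are equivalent from the viewpoint of every such vertex. The remaining work, the finite case check on the gadget $D$, is routine but must be carried out exhaustively to cover the sub-cases where the triangle $d_1d_2d_3$ is only partially matched.
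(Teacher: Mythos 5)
Your proof is correct, and it takes a genuinely different route from the paper's. You certify popularity of $M = M_0 \cup \{(d_0,d_1),(d_2,d_3)\}$ directly from the definition: you split the tally $\Delta(M,N)$ into the contribution of $X \cup Y \cup \{z\}$ and that of $D$, show the first equals $\Delta_{G_0}(M_0,\hat N) \ge 0$ (the interface edges $(d_0,v)$ are harmless precisely because $d_0$ is $v$'s last choice and $M_0$ matches every $v \in X\cup Y$, so $v$ votes $+1$ in both elections --- this is also exactly where the hypothesis of the lemma is used), and dispose of the second by a finite case check on how $N$ meets $D$. That check does go through: with $M$ giving $d_0$ and $d_2$ their top choices, one finds $\delta_{d_0},\delta_{d_2}\ge 0$ always, and the only configurations where $\delta_{d_1}+\delta_{d_3}<0$ (those containing $(d_1,d_3)$, or $(d_1,d_2)$) are compensated, so the minimum of $\sum_{v\in D}\delta_v$ is $0$. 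The paper proves the lemma differently: it invokes the structural characterization of popularity from \cite{HK11} --- after deleting all edges negative to $M$, there must be no alternating cycle with a blocking edge, no alternating path from an unmatched vertex with a blocking edge, and no alternating path with two blocking edges --- and observes that every edge between $D$ and $G_0$ is negative to $M$, so the pruned graph splits into two components ($G'_0$ and the pruned gadget $D$) in which these conditions can be verified separately. Both arguments rest on the same structural fact, that the $d_0$-interface cannot help any challenger; the paper's version is shorter but leans on the quoted theorem, and its component-wise verification is sound only because alternating paths and cycles cannot cross between components of the pruned graph --- which is exactly the decomposition your vote-splitting identity makes explicit. Your version is self-contained and needs no machinery beyond the definition, at the cost of the bookkeeping identity and the exhaustive (but finite and correct) analysis on $D$.
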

\begin{proof}
  Let $M_0$ be a popular matching in $G_0$ that matches all vertices in $X \cup Y$. We claim $M = M_0 \cup \{(d_0,d_1),(d_2,d_3)\}$ is a popular matching in $G$.

  Let $G'_0$ be the subgraph obtained by
  removing all {\em negative}\footnote{An edge $(u,v)$ is negative to $M_0$ if both $u$ and $v$ prefer their assignments in $M_0$ over each other.} edges
  to $M_0$ from $G_0$. Since $M_0$ is popular in $G_0$, it satisfies the following three necessary and sufficient conditions for
  popularity (from~\cite{HK11}) in $G'_0$.

  \begin{enumerate}
      \item There is no alternating cycle that contains a blocking edge.
      \item There is no alternating path with $z$ as an endpoint that contains a blocking edge.
      \item There is no alternating path that contains {\em two} blocking edges.
  \end{enumerate}
  
  We need to show that $M$ obeys the above 3 conditions in the subgraph $G'$ obtained by deleting negative edges to $M$ from $G$.
  The graph $G'$ is the graph $G'_0$ along with some edges within the gadget $D$.
  There is no edge in $G'$ between $D$ and any vertex in $G_0$ since every edge in $G$ between  $D$ and a vertex in $G_0$ is {\em negative}
  to $M$. This is because for any such edge $(d_0,v)$, the vertex $d_0$ prefers $d_1$ (its partner in $M$) to $v$ and similarly, $v$ prefers each
  of its neighbors in $G_0$ to $d_0$. Since $v \in X \cup Y$, note that $M_0$ matches $v$ to one of its neighbors in $G_0$.

  It is easy to check that the edge set $\{(d_0,d_1),(d_2,d_3)\}$ satisfies the above 3 conditions in the subgraph of $D$ obtained by pruning negative edges
  to $M$. We know that
  $M_0$ satisfies the above 3 conditions in $G'_0$.  Thus $M$ satisfies the above 3 conditions in $G'$. Hence $M$ is popular in $G$. \qed
\end{proof}

We will show the following theorem in Section~\ref{sec:thm3-proof}.

\begin{theorem}
  \label{thm:redn}
  $G_0$ admits a popular matching that matches all vertices in $X \cup Y$ if and only if $\phi$ is 1-in-3 satisfiable.
\end{theorem}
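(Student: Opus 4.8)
The plan is to analyze both directions through the witness framework of Section~\ref{prelims}. Let $H = (X\cup Y, E_H)$ be the bipartite graph consisting of those edges of $G_0$ with both endpoints in $X\cup Y$; the partition $(X,Y)$ makes $H$ bipartite. By Lemma~\ref{new-lemma1}, a matching of $G_0$ that matches all of $X\cup Y$ leaves only $z$ unmatched and is therefore a perfect matching of $H$, and since $z$ is then indifferent between any two such matchings, $M_0$ is popular in $G_0$ exactly when it is popular in $H$ and, in addition, no neighbour of $z$ prefers $z$ to its $M_0$-partner. By Theorem~\ref{thm:witness} and the lemma of~\cite{Kav16}, the popularity of $M_0$ in $H$ is certified by a witness $\vec\alpha \in \{0,\pm1\}^{X\cup Y}$, and the extra $z$-condition forces $\wt_{M_0}(z,v)\le 0$, hence $\alpha_v \ge 0$, for every neighbour $v$ of $z$. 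By Lemma~\ref{prop1} each component of the popular subgraph $F_H$ is either in stable state ($\vec\alpha\equiv 0$) or dominant state ($\alpha_u\in\{\pm1\}$) as in Definition~\ref{def:stab-domn}, so the sign constraint $\alpha_v\ge 0$ pins down, for the components meeting $z$, both the state and the signs. The whole proof rests on reading off these states and showing they correspond to $1$-in-$3$ truth assignments.

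For the forward direction I would start from a popular $M_0$ with a $\{0,\pm1\}$ witness $\vec\alpha$ and first pin down the level-$1$ (variable) gadgets. Because $z$ ranks every $x_i,y_i$ above all its other neighbours and is unmatched, the $z$-condition forces each variable gadget into one of its two admissible internal matchings --- the stable matching $\{(x_i,y_i),(x'_i,y'_i)\}$, or the matching $\{(x_i,y'_i),(x'_i,y_i)\}$ whose internal blocking edge $(x_i,y_i)$ makes the component dominant --- and I would \emph{define} $X_i$ to be true or false according to which state occurs. The cross-level edges ($x'_i,y'_i$ into level-$0$ vertices, $x_i,y_i$ into level-$2$ vertices) then transmit the $\alpha$-values on the variable gadget to the clause gadgets through the witness inequality $\alpha_a+\alpha_b \ge \wt_{M_0}(a,b)$; I would trace these inequalities through the level-$0$, level-$2$, and finally the level-$3$ gadget of each clause $c=X_i\vee X_j\vee X_k$. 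The level-$3$ gadget, whose vertices $s_0,t_0$ have preference lists interleaving the level-$2$ vertices $q_0,q_3$ and $p_7,p_4$, acts as the $1$-in-$3$ selector: I would show that a consistent $\{0,\pm1\}$ witness exists on all three arms of clause $c$ if and only if \emph{exactly one} of $X_i,X_j,X_k$ is set true, which yields a satisfying $1$-in-$3$ assignment.

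For the converse I would run the argument in reverse, constructively. Given a $1$-in-$3$ satisfying assignment, I would write down $M_0$ explicitly: in each variable gadget take the internal matching dictated by the truth value, in each clause gadget take the internal matching prescribed by its unique true literal, and leave $z$ unmatched. I would then exhibit a witness $\vec\alpha\in\{0,\pm1\}^{X\cup Y}$ by assigning $0$ to every vertex of a component intended to be stable and $\pm1$ (with the sign chosen according to the $A$/$B$ side and respecting $\alpha_v\ge 0$ at $z$'s neighbours) to every vertex of a dominant component, and verify the three families of constraints of Theorem~\ref{thm:witness} --- the edge inequalities inside each gadget, the cross-level edge inequalities, and the self-loop and $z$-edge conditions --- together with $\sum_u \alpha_u = 0$. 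Lemma~\ref{new-lemma2} then upgrades popularity in $G_0$ to popularity in $G$, although for this theorem only popularity in $G_0$ is required.

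The main obstacle is the gadget case analysis, and in particular the cross-level coupling: the same vertices $s_0,t_0$ sit in the preference lists of two different level-$2$ gadgets, and each of $x_i,x'_i,y_i,y'_i$ feeds both a level-$0$ and a level-$2$ gadget, so the witness constraints of distinct clauses and of the variable gadgets are entangled. The delicate step is to show that the level-$3$ selector admits a consistent witness on all three arms precisely under the ``exactly one true'' condition --- that neither zero nor two-or-more true literals extend to a global $\{0,\pm1\}$ witness --- and that these local characterizations glue into a global one. Carrying this out is a finite but substantial check of which internal matching of each gadget is compatible with each incoming pattern of $\alpha$-values, and I expect it to be the technical heart of the argument.
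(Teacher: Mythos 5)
Your overall architecture mirrors the paper's (witness framework, stable/dominant gadget states encoding truth values, level-3 gadgets forcing dominance, level-0 gadgets forcing stability, ``at least one''/``at most one'' counting per clause), but the foundation of your forward direction contains a genuine error. You delete $z$ and claim that $M_0$ is popular in $G_0$ \emph{exactly when} it is popular in $H = (X \cup Y, E_H)$ and no neighbour of $z$ prefers $z$ to its $M_0$-partner. The ``only if'' half of this equivalence is true, but the ``if'' half is false in general, and so is the ensuing inference ``$\wt_{M_0}(z,v) \le 0$, hence $\alpha_v \ge 0$''. Counterexample to the equivalence: take vertices $a,b,c,d$ with edges $(a,b),(b,c),(c,d)$ and preferences $a\colon b \succ z$; $\ b\colon c \succ a$; $\ c\colon b \succ d$; $\ d\colon c$; let $z$ be adjacent only to $a$. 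The matching $M_0 = \{(a,b),(c,d)\}$ is popular in the graph with $z$ deleted (its worst comparison, against $\{(b,c)\}$, is a tie) and $a$ prefers $b$ to $z$, yet $N = \{(a,z),(b,c)\}$ defeats $M_0$ by one vote once $z$ is present. The point is that a matching which \emph{ties} with $M_0$ while leaving a neighbour of $z$ unmatched can be augmented by an edge to $z$ into a matching that wins; popularity in $H$ plus your local $z$-condition does not exclude such ties. In witness terms: in this example the unique $\{0,\pm 1\}$ witness for $M_0$ in $H$ has $\alpha_a = -1$, so popularity in $H$ in no way forces $\alpha_v \ge 0$ at neighbours of $z$ --- the existence of a witness that \emph{additionally covers the $z$-edges} is a strictly stronger property, and it is exactly what you need but never establish. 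This matters because every later step of your forward direction (level-0 gadgets stable, $\alpha_{p_1},\alpha_{p_4} \ge 0$, $\alpha_{x_r},\alpha_{y_r}\ge 0$, etc.) leans on these sign constraints.

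The paper avoids this trap differently: since $G_0$ is non-bipartite precisely because $z$ has neighbours on both sides, it \emph{splits} $z$ into two vertices $z$ and $z'$ (with $z$ keeping the edges to $X$ and $z'$ taking the edges to $Y$), producing a bipartite instance that \emph{retains} all the $z$-edges. Popularity in $G_0$ of a matching covering $X \cup Y$ is then equivalent to popularity in this bipartite graph, and only there does the machinery of Theorem~\ref{thm:witness} and Lemma~\ref{prop0} legitimately yield a $\{0,\pm 1\}$ witness with $\alpha_z = \alpha_{z'} = 0$, whence the covering constraints on the edges $(z,v)$ and $(z',v)$ give $\alpha_v \ge 0$. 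Your converse direction can be repaired without this device: sufficiency of a witness does not need bipartiteness, so if, as you indicate, you verify the $z$-edge constraints of your constructed witness directly, popularity in $G_0$ follows. The gadget-by-gadget analysis you defer as ``a finite but substantial check'' is indeed the content of the paper's Lemmas~\ref{lemma1}--\ref{lemma3} and of its explicit witness in the converse. But as written, your forward direction does not go through, and fixing it essentially requires the paper's $z$/$z'$ splitting (or an equivalent argument handling the odd cycles through $z$).
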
  

Since $G$ admits a popular matching if and only if the instance $G_0$ admits a popular matching that matches all vertices in
$X \cup Y$,
Theorem~\ref{thm:redn} implies the NP-hardness of the popular matching problem in a roommates instance $G = (V,E)$. Thus we can
conclude Theorem~\ref{main-thm} stated in Section~\ref{intro}.

\section{Proof of Theorem~\ref{thm:redn}}
\label{sec:thm3-proof}

Our goal now is to use the LP framework for bipartite matchings from Section~\ref{prelims}. However the graph $G_0$ is non-bipartite.
This is due to the presence of the vertex $z$. So let us convert the graph $G_0$ on vertex set $X \cup Y \cup \{z\}$ into a bipartite instance
$H$ by splitting the vertex $z$ into 2 vertices $z$ and $z'$.
That is, every occurrence of $z$ in the preference lists of vertices in $Y$ will be replaced by $z'$.

Thus $H = (X'\cup Y', E_H)$ where $X' = X \cup \{z'\}$ and $Y' = Y \cup \{z\}$. The edge set $E_H$ of $H$ is the same as the edge set of $G_0$,
except that each edge $(z,v)$ where $v \in Y $ gets replaced by the edge $(z',v)$ in $H$.

The graph $H$ is a bipartite graph with $X \cup \{z'\}$ on the left and $Y\cup\{z\}$ on the right.
The preference list of $z$ (similarly, $z'$) is the original preference list of $z$ restricted to neighbors in $X$ (resp., $Y$).
The vertices of $H \setminus \{z,z'\}$ are level~$i$ vertices in $G$, for $i = 0,\ldots,3$. Let $F_H$ be the popular subgraph of $H$.

\begin{lemma}
  \label{lem:conn-comp}
   Let $C$ be any level~$i$ gadget in $H$, where $i \in \{0,1,2,3\}$. All the vertices in $C$ belong to the same connected component in $F_H$.
\end{lemma}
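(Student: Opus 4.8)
The plan is to prove the lemma by exhibiting, for each level-$i$ gadget $C$, a family of popular matchings of $H$ whose restrictions to $C$ together form a connected spanning subgraph of $V(C)$; since every edge of such a matching is by definition a popular edge (Definition~\ref{def:popular-subgraph}), this places all of $C$ in a single component of $F_H$. Two facts drive the construction. First, $H$ is bipartite, so it admits a stable matching, and every stable matching is popular; the mutually top-ranked pairs inside each gadget --- e.g.\ $(x_i,y_i)$ and, once these are fixed, $(x'_i,y'_i)$ in a level-$1$ gadget, the pairs $(p^c_0,q^c_0)$ and $(p^c_1,q^c_1)$ in a level-$2$ gadget, and the pairs $(s^c_\ell,t^c_\ell)$ for $\ell\ge 1$ in a level-$3$ gadget --- are forced into every stable matching and hence are popular edges. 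This already partitions $V(C)$ into a few small popular pieces. Second, to fuse these pieces I would produce additional popular matchings obtained by rotating the stable matching along a carefully chosen alternating cycle supported on $C$ together with at most its boundary edges to $z$ or to the adjacent-level vertices named in Figures~\ref{level0:example}--\ref{level2:example}.

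Concretely I would argue level by level. For the $K_{2,2}$ gadgets of levels~$0$ and~$1$, the stable matching supplies two disjoint popular edges, and rotating along the $4$-cycle through all four gadget vertices produces a second matching using the other two internal edges; certifying that rotated matching popular supplies the remaining edges and yields the full internal $4$-cycle in $F_H$, so $C$ is connected. For the level-$2$ gadgets (six vertices) and the level-$3$ gadgets (eight vertices, with the interleaved preferences of $s_0$ and $t_0$) the internal edge set is richer, and the relevant alternating cycles necessarily pass through the external neighbours $z$, $s^c_0,t^c_0,p^c_4,p^c_7,q^c_0,q^c_3$; here I would pick two or three popular matchings whose union of gadget edges spans $V(C)$, using the stable matching as a base and switching the \emph{state} of the gadget from stable to dominant (Definition~\ref{def:stab-domn}) to realise the cross edges.

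The crux --- and the main obstacle --- is verifying that each rotated matching is actually popular. I would do this in one of two equivalent ways: either by checking the three necessary and sufficient conditions of \cite{HK11} recalled in the proof of Lemma~\ref{new-lemma2} (after deleting negative edges, no alternating cycle contains a blocking edge, no alternating path with $z$ as an endpoint contains a blocking edge, and no alternating path contains two blocking edges), or by directly exhibiting a witness $\vec{\alpha}\in\{0,\pm1\}^n$ as in Theorem~\ref{thm:witness} and \cite{Kav16} --- setting $\alpha_u=\pm1$ on the rotated gadget (its dominant state) and $\alpha_u=0$ elsewhere, and verifying the edge and self-loop inequalities together with $\sum_u\alpha_u=0$. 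The delicate point is that rotating a gadget turns a mutually top pair into a blocking edge (for instance $(x_i,y_i)$ once $x_i,y_i$ are matched to their second choices), so I must confirm that this blocking edge lies in no forbidden alternating structure; this forces the external partners, reached through $z$ and the consecutive-level vertices, to be chosen so that no augmenting path or two-blocking-edge path is created, and Lemma~\ref{prop1} guarantees that the resulting sign pattern is consistent across the whole component. Once each gadget's rotated matching is certified popular, the union of the popular edges found spans $V(C)$ and the lemma follows.
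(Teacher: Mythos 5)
Your proposal takes essentially the same route as the paper's proof: the paper likewise exhibits, for each gadget, a stable matching together with one or more dominant matchings whose edges inside the gadget connect all of its vertices (the men-/women-optimal stable matchings for level~0, the pairs $(x_i,y_i),(x'_i,y'_i)$ versus $(x_i,y'_i),(x'_i,y_i)$ for level~1, dominant matchings containing $(p^c_0,q^c_2),(p^c_2,q^c_0)$ and $(p^c_1,q^c_2),(p^c_2,q^c_1)$ for level~2, and dominant matchings containing $(s^c_0,t^c_\ell),(s^c_\ell,t^c_0)$ for $\ell = 1,2,3$ for level~3), so that every such edge is popular and each gadget lies in a single component of $F_H$. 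One minor inaccuracy in your plan: the alternating structures taking the stable matching to these dominant matchings lie entirely inside the gadgets (e.g.\ the $4$-cycle on $p^c_0,q^c_0,p^c_2,q^c_2$, and the augmenting path $s^c_0$--$t^c_1$--$s^c_1$--$t^c_0$), not through $z$ or adjacent-level vertices; the external neighbours enter only when verifying the witness inequalities.
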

\begin{proof}
  Consider a level~0 gadget in $H$, say on $a^c_1,b^c_1,a^c_2,b^c_2$. The ``men-optimal'' (or $X'$-optimal) stable matching in $H$ contains
  the edges $(a^c_1,b^c_1)$ and $(a^c_2,b^c_2)$ while the ``women-optimal'' (or $Y'$-optimal) stable matching contains the edges $(a^c_1,b^c_2)$ and $(a^c_2,b^c_1)$.
  Thus there are popular edges among these 4 vertices and so these 4 vertices belong to the same connected component in $F_H$.

  Consider a level~1 gadget in $H$, say on $x_i,y_i,x'_i,y'_i$. A stable matching in $H$ contains $(x_i,y_i)$ and $(x'_i,y'_i)$ while
  a dominant matching in $H$ contains $(x_i,y'_i)$ and $(x'_i,y_i)$. Thus there are popular edges among these 4 vertices
  and so these 4 vertices belong to the same connected component in $F_H$.

  Consider a level~2 gadget in $H$, say on $p^c_i,q^c_i$ for $i = 0,1,2$.
  There is a dominant matching in $H$ that contains the edges $(p^c_0,q^c_2)$ and $(p^c_2,q^c_0)$.
  There is also another dominant matching in $H$ that contains the edges $(p^c_1,q^c_2)$ and $(p^c_2,q^c_1)$.
  Thus there are popular edges among these 6 vertices and so these 6 vertices  belong to the same connected component in $F_H$.

  Consider a level~3 gadget in $H$, say on $s^c_i,t^c_i$ for $i = 0,\ldots,3$.
  There is a dominant matching in $H$ that contains $(s^c_0,t^c_1)$, and $(s^c_1,t^c_0)$.
  There is another dominant matching in $H$ that contains $(s^c_0,t^c_2)$ and $(s^c_2,t^c_0)$. 
  There is yet another dominant matching in $H$ that contains $(s^c_0,t^c_3)$ and $(s^c_3,t^c_0)$. 
  Thus there are popular edges among these 8 vertices and so these 8 vertices belong to the same connected component in $F_H$. \qed
\end{proof}  

The lemma below shows that no edge between a level~$\ell$ vertex and a level~$(\ell+1)$ vertex is used in any popular matching in $H$, for $\ell \in \{0,1,2\}$. 

\begin{lemma}
  \label{lem:separate}
  There is no popular edge in $H$ between a level~$\ell$ vertex and a level~$\ell+1$ vertex for $\ell \in \{0,1,2\}$.
\end{lemma}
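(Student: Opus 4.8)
The plan is to drive everything through the tightness characterisation of Lemma~\ref{prop0}: for any popular matching $M$ in $H$ with witness $\vec{\alpha}\in\{0,\pm1\}^n$, every popular edge $(a,b)$ satisfies $\alpha_a+\alpha_b=\wt_M(a,b)$, whereas an arbitrary edge only satisfies $\alpha_a+\alpha_b\ge \wt_M(a,b)$. Contrapositively, an edge that is \emph{slack} with respect to some popular $M,\vec{\alpha}$ cannot be popular. So I would fix an arbitrary popular matching $M$ with witness $\vec{\alpha}$, assume toward a contradiction that an edge between a level-$\ell$ and a level-$(\ell+1)$ vertex lies in $M$, and show this is impossible; since a popular edge is by definition contained in some popular matching, doing this for all three level pairs $\ell\in\{0,1,2\}$ proves the lemma. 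Note that a matched edge has weight $0$, so the hypothesised edge $(a,b)\in M$ already forces $\alpha_a+\alpha_b=0$.

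The first step is to pin the witness at the vertices incident to inter-level edges. For the level-$2$/level-$3$ edges these are the hub vertices $s^c_0,t^c_0$ of the level-$3$ gadgets (together with $q^c_0,q^c_3,p^c_4,p^c_7$). I would first argue that $s^c_0$ and $t^c_0$ are \emph{unstable}: any internal matching covering $s^c_0$, say $s^c_0$ matched to $t^c_k$, creates the blocking edge $(s^c_k,t^c_k)$, and the level-$2$ edges at $s^c_0,t^c_0$ are never stable (matching $q^c_0$ to its last choice $s^c_0$ is blocked by $p^c_0$). Hence no stable matching of $H$ covers $s^c_0$ or $t^c_0$, so by Lemma~\ref{prop0}, whenever they are matched we have $\alpha_{s^c_0}=\alpha_{t^c_0}=-1$; by Lemma~\ref{prop1} their component (which, by Lemma~\ref{lem:conn-comp}, contains the whole level-$3$ gadget and, once the inter-level edge is present, the incident level-$2$ gadget) is in dominant state. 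For the level-$0$/level-$1$ and level-$1$/level-$2$ edges the incident vertices are all \emph{stable}, so instead I would use Lemma~\ref{prop1} directly together with a tabulation: each gadget's component is uniformly stable ($\vec{\alpha}\equiv 0$) or dominant, and for each of the constantly many internal popular matchings of a gadget I would record the induced $\pm1$ pattern (e.g. a dominant level-$1$ gadget forces $\alpha_{x_i}=\alpha_{y_i}=+1$ and $\alpha_{x'_i}=\alpha_{y'_i}=-1$).

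The second step is the propagation. Covering a hub vertex by an inter-level edge leaves its gadget with a parity deficiency that forces a \emph{second} boundary vertex of the same gadget to be matched externally as well (for instance $(s^c_0,q^c_0)\in M$ forces $s^c_1,s^c_2,s^c_3$ onto $t^c_1,t^c_2,t^c_3$ and hence $t^c_0$ onto $p^c_4$ or $p^c_7$); meanwhile the constraints $\alpha_a+\alpha_b\ge \wt_M(a,b)$ on the gadget edges, combined with the $\pm1$ values already pinned, fix the sign of $\vec{\alpha}$ at every vertex of the gadget. I would then evaluate the constraint on the \emph{other} inter-level edges leaving the gadget and show that the forced sign pattern violates one of them -- so the putative witness is not a witness -- or, continuing into the incident gadget, forces some vertex to carry both $+1$ and $-1$. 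Running this for each of the three edge types (using $\alpha_{s^c_0}=\alpha_{t^c_0}=-1$ for level $2$/$3$, and the state/weight tabulation for levels $0$/$1$ and $1$/$2$, where no endpoint is unstable) is what yields the contradiction.

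The main obstacle is exactly this propagation-and-consistency analysis: I must show the forced sign pattern is globally inconsistent \emph{regardless of $\phi$}. This is delicate because the very same mechanism -- boundary vertices pushed outside their gadget, signs propagating across gadget boundaries -- is what makes the ``cover all of $X\cup Y$'' requirement encode $1$-in-$3$ SAT in Theorem~\ref{thm:redn}; here the distinction is that we are free to leave vertices uncovered, which should make the obstruction unconditional. The genuinely subtle point is verifying that no reassignment of the still-undetermined gadget states can repair the violated inequality, and this requires a careful case analysis of the few internal popular matchings of the two gadgets incident to each inter-level edge and of the $\pm1$ signs they impose on the shared boundary vertex.
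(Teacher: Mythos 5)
Your opening paragraph states exactly the right tool---an edge that is slack with respect to \emph{some} popular matching and witness cannot be popular---but you then abandon that direction and argue the much harder converse: you assume the inter-level edge lies in a popular matching $M$ and try to refute the existence of a witness for that same $M$ by sign propagation. The paper's proof of Lemma~\ref{lem:separate} uses the slackness direction exactly as you first stated it: for each edge type it exhibits one explicitly known popular matching and witness under which the edge has weight $-2$ while the witness values at its endpoints sum to $0$, and Lemma~\ref{prop0} finishes in one line. Concretely, for a level-0/level-1 edge such as $(a^c_1,y'_j)$ it takes the men-optimal stable matching $S$ with witness $\vec{0}$ (then $(a^c_1,b^c_1),(x'_j,y'_j)\in S$, so $\wt_S(a^c_1,y'_j)=-2<0$); for a level-1/level-2 edge such as $(p^c_2,y_j)$ it takes the dominant matching containing $(p^c_0,q^c_2),(p^c_2,q^c_0)$, every witness of which has $\beta_{p^c_2}=-1$ and $\beta_{y_j}=1$; and for a level-2/level-3 edge such as $(s^c_0,q^c_0)$ it takes the dominant matching containing $(s^c_0,t^c_1),(s^c_1,t^c_0)$. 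The missing idea in your proposal is that Lemma~\ref{prop0} lets you test an edge against a matching of your own choosing---the certificate need not contain, or even be compatible with, the edge being tested---so no case analysis over matchings containing the edge is ever needed.

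The step you yourself flag as the ``main obstacle'' is a genuine gap, not a routine verification left to the reader, and as scoped it would not go through. First, your case analysis is confined to ``the two gadgets incident to each inter-level edge,'' but the hypothesis cascades: $(s^c_0,q^c_0)\in M$ forces $(t^c_0,p^c_4)$ or $(t^c_0,p^c_7)$ into $M$; $(p^c_2,y_j)\in M$ forces $(p^c_1,q^c_1)\in M$ and then one of $(x_k,q^c_2)$, $(q^c_0,z)$, $(q^c_0,s^c_0)$ into $M$, since $p^c_0$ can serve only one of $q^c_0,q^c_2$. In the branch $M\supseteq\{(p^c_2,y_j),(x_k,q^c_2),(p^c_0,q^c_0),(p^c_1,q^c_1)\}$ all covering constraints inside the incident level-1 and level-2 gadgets are satisfiable (one gets $\alpha_{x_j}=\alpha_{y_j}=\alpha_{x_k}=\alpha_{y_k}=\alpha_{q^c_0}=\alpha_{p^c_1}=1$ and $\alpha_{p^c_2}=\alpha_{q^c_2}=\alpha_{p^c_0}=\alpha_{q^c_1}=-1$ with no conflict), so the contradiction, which must exist, lies several gadgets away---it forces further inter-level edges out of the $X_j$ and $X_k$ gadgets, through a level-0 gadget, into a third variable gadget, before any inequality breaks. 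Second, your pinning arguments (a neighbor of $z$ has $\alpha\ge 0$; the fixed $\pm 1$ pattern of a dominant level-1 gadget) are imported from the paper's analysis of \emph{desired} popular matchings, which by definition leave $z,z'$ unmatched; here $M$ is an arbitrary popular matching of $H$ and may match $z$ or $z'$ (e.g.\ the branch $(q^c_0,z)\in M$ above), so those inequalities are simply not available in the cases where you need them. None of this threatens the truth of the lemma, but it means your plan is incomplete at its core and strictly harder than the argument the lemma actually requires.
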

\begin{proof}
  Let $c = X_i \cup X_j \cup X_k$ be a clause in $\phi$. We will first show that no edge between a level~0 vertex and a level~1 vertex can be popular.
  Consider any such edge  in $H$, say $(a^c_1,y'_j)$. In order to show this edge cannot be present in a popular matching, we will show a popular matching $S$ along
  with a witness $\vec{\alpha}$ such that $\alpha_{a^c_1} + \alpha_{y'_j} > \wt_S(a^c_1,y'_j)$. Then it will immediately follow from the slackness of this edge that
  $(a^c_1,y'_j)$ does not belong to any popular matching (by Lemma~\ref{prop0}).

  Let $S$ be the men-optimal stable matching. The vector $\vec{\alpha} = \vec{0}$ is a witness to $S$. The edges $(a^c_1,b^c_1)$
  and $(x'_j,y'_j)$ belong to $S$, so we have $\wt(a^c_1,y'_j) = -2$ while $\alpha_{a^c_1} = \alpha_{y'_j} = 0$. Thus $(a^c_1,y'_j)$ is not a popular edge.
  We can similarly show that $(x'_k,b^c_1)$ is not a popular edge by considering the women-optimal stable matching $S'$. 
 
  \smallskip
  
  We will now show that no edge between a level~1 vertex and a level~2 vertex is popular.
  Consider any such edge  in $H$, say $(p^c_2,y_j)$.
  Consider the dominant matching $N$ that contains the edges $(p^c_0,q^c_2)$ and $(p^c_2,q^c_0)$.
  All dominant matchings in $H$ contain the edges $(x_j,y'_j)$ and $(x'_j,y_j)$.

  Any witness $\vec{\beta}$ to $N$
  sets $\beta_{p^c_2} = \beta_{q^c_2} = -1$ and $\beta_{x_j} = \beta_{y_j} = 1$. This is because $(x_j,y_j)$ and $(p^c_0,q^c_0)$ are blocking edges to $N$,
  so $\beta_{x_j} = \beta_{y_j} = 1$ and similarly, $\beta_{p^c_0} = \beta_{q^c_0} = 1$ (this makes $\beta_{p^c_2} = \beta_{q^c_2} = -1$).
  Consider the edge $(p^c_2,y_j)$. We have $\wt_N(p^c_2,y_j) = -2$ while  $\beta_{p^c_2} + \beta_{y_j} = 0$. Thus this edge
  is slack and so it cannot be a popular edge. We can similarly show that the edge $(x_k,q^c_2)$ is not popular by considering the dominant matching $N'$
  that includes the edges $(p^c_1,q^c_2)$ and $(p^c_2,q^c_1)$.
 
  \smallskip
  
  We will now show that no edge between a level~2 vertex and a level~3 vertex is popular.
  Consider any such edge  in $H$, say $(s^c_0,q^c_0)$. Consider the dominant matching $T$ that includes the edges $(s^c_0,t^c_1)$ and $(s^c_1,t^c_0)$. Here $q^c_0$ is
  matched either to $p^c_0$ or to $p^c_2$. In both cases, we have $\wt_T(s^c_0,q^c_0) = -2$ while $\gamma_{s^c_0} = -1$ and $\gamma_{q^c_0} = 1$, where
  $\vec{\gamma}$ is a witness to the matching $T$. Hence $(s^c_0,q^c_0)$ is not a popular edge. It can similarly be shown for any edge $e$ 
  between a level~2 vertex and a level~3 vertex in $H$ that $e$ is not a popular edge. \qed
\end{proof}

\subsection{Desired popular matchings in $H$}
It is simple to see that $M$ is a popular matching in $G_0$ that matches all vertices in $X \cup Y$ and leaves $z$ unmatched if and only if
$M$ is a popular matching in $H$ that matches all vertices in $X \cup Y$ and leaves $z$ and $z'$ unmatched.
We will call such a matching $M$ in $H$ a ``desired popular matching'' here. Let $M$ be such a matching and let
$\vec{\alpha} \in\{0,\pm 1\}^n$ be a witness of $M$, where $n = |X'\cup Y'|$.

The following two observations will be important for us. Recall Definition~\ref{def:stab-domn} from Section~\ref{prelims}.

\begin{itemize}
\item[1.] All level~3 gadgets have to be in {\em dominant} state in $\vec{\alpha}$.
\item[2.] All level~0 gadgets have to be in {\em stable} state in $\vec{\alpha}$.
\end{itemize}

  The vertices $s^c_0$ and $t^c_0$, for all clauses $c$, are left unmatched in any stable matching in $H$.
  Since $M$ has to match the unstable vertices $s^c_0$ and $t^c_0$ for all clauses $c$, 
  $\alpha_{s^c_0} = \alpha_{t^c_0} = -1$ for all $c$ (by Lemma~\ref{prop0}). Thus the first observation follows from Lemma~\ref{prop1}. We
  prove the second observation below.

\begin{claim}
  Any level~0 gadget has to be in stable state in $\vec{\alpha}$.
\end{claim}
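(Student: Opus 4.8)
The plan is to pin down the witness values of the two ``boundary'' vertices $a^c_1$ and $b^c_1$ of a level~0 gadget (the ones adjacent to the split vertices $z$ and $z'$) and force them to be $0$. By Lemma~\ref{lem:conn-comp} the four vertices $a^c_1, b^c_1, a^c_2, b^c_2$ lie in a single component of $F_H$, so by Lemma~\ref{prop1} a single gadget vertex with witness value $0$ drags the whole component into the stable state. Thus the target reduces to showing $\alpha_{a^c_1} = \alpha_{b^c_1} = 0$.

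First I would establish $\alpha_z = \alpha_{z'} = 0$. The key observation is that $z$ and $z'$ are \emph{unstable} vertices: every neighbor of $z$ (a level~1 vertex $x_i$ or a level~0 vertex of type $a^c_{2i-1}$) is matched in every stable matching of $H$ to a partner it strictly prefers to $z$, since $z$ sits at the bottom of each such list; hence no stable matching can match $z$, and symmetrically for $z'$. Because the desired matching $M$ leaves $z$ and $z'$ unmatched, Lemma~\ref{prop0} gives $\alpha_z = \alpha_{z'} = 0$. I expect justifying the unstable status of $z, z'$ to be the one place needing a little care, though it follows directly from the stable matchings exhibited within each gadget in the proof of Lemma~\ref{lem:conn-comp}.

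Next I would read off lower bounds from the edges $(a^c_1, z)$ and $(b^c_1, z')$. Since $M$ matches $a^c_1$ to a partner it prefers to its last choice $z$, while $z$ remains unmatched and votes for the real neighbor $a^c_1$ over its self-loop, the edge weight is $\wt_M(a^c_1, z) = (-1) + (+1) = 0$; the feasibility inequality $\alpha_{a^c_1} + \alpha_z \ge \wt_M(a^c_1, z)$ of Theorem~\ref{thm:witness}, together with $\alpha_z = 0$, then yields $\alpha_{a^c_1} \ge 0$. The identical computation on $(b^c_1, z')$ gives $\alpha_{b^c_1} \ge 0$.

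Finally I would close the argument using the internal edge $(a^c_1, b^c_1)$. By Lemma~\ref{lem:separate} no popular edge leaves the gadget toward level~1, and $z, z'$ are unmatched, so $M$ restricted to the gadget is one of the two internal perfect matchings of the $K_{2,2}$ between $\{a^c_1, a^c_2\}$ and $\{b^c_1, b^c_2\}$; a direct check of the preference lists shows $\wt_M(a^c_1, b^c_1) = 0$ in either case. Since $(a^c_1, b^c_1)$ is a popular edge (it lies in the $X'$-optimal stable matching from Lemma~\ref{lem:conn-comp}), Lemma~\ref{prop0} forces $\alpha_{a^c_1} + \alpha_{b^c_1} = \wt_M(a^c_1, b^c_1) = 0$. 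Combined with the bounds $\alpha_{a^c_1}, \alpha_{b^c_1} \ge 0$, this gives $\alpha_{a^c_1} = \alpha_{b^c_1} = 0$, so by Lemma~\ref{prop1} the whole gadget is in stable state, as required.
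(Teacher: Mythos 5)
Your proposal is correct and follows essentially the same route as the paper's proof: non-negativity of $\alpha$ at the gadget vertices adjacent to $z,z'$ (via $\alpha_z=\alpha_{z'}=0$ and $\wt_M(\cdot,z)=\wt_M(z',\cdot)=0$), then the popular intra-gadget edge with $\wt_M=0$ forcing the sum to zero via Lemma~\ref{prop0}, and finally Lemmas~\ref{lem:conn-comp} and~\ref{prop1} to propagate to the whole gadget. The only differences are cosmetic: you work with $a^c_1,b^c_1$ instead of $a^c_3,b^c_3$, and you make explicit two steps the paper leaves implicit (why $z,z'$ are unstable vertices, and the component argument extending the conclusion to $a^c_2,b^c_2$).
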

\begin{proof}
  Consider any level~0 gadget, say on vertices $a^c_3,b^c_3,a^c_4,b^c_4$. Since $M$ is a popular matching, we have $\alpha_{a^c_3} + \alpha_z \ge \wt_M(a^c_3,z)$ and
  $\alpha_{z'} + \alpha_{b^c_3}  \ge \wt_M(z',b^c_3)$. Since $z$ and $z'$ are unmatched in $M$, it follows from Lemma~\ref{prop0} that $\alpha_{z} = \alpha_{z'} = 0$.
  We also have $\wt_M(a^c_3,z) = 0$ since $z$ prefers $a^c_3$ to being unmatched while  $a^c_3$ likes any of its neighbors in $Y$ (one of them is its partner
  in $M$) to $z$. Similarly, $\wt_M(z',b^c_3) = 0$. Thus $\alpha_{a^c_3} \ge 0$ and similarly, $\alpha_{b^c_3} \ge 0$.

  The edge $(a^c_3,b^c_3)$ is a popular edge. Thus $\alpha_{a^c_3} + \alpha_{b^c_3} = \wt_M(a^c_3,b^c_3)$ (by Lemma~\ref{prop0}). Observe that $\wt_M(a^c_3,b^c_3) = 0$
  since either $(a^c_3,b^c_3) \in M$ or $(a^c_3,b^c_4), (a^c_4,b^c_3)$ are in $M$. Thus $\alpha_{a^c_3} + \alpha_{b^c_3} = 0$.
  Since $\alpha_{a^c_3}$ and  $\alpha_{b^c_3}$ are non-negative, it follows that $\alpha_{a^c_3} = \alpha_{b^c_3} = 0$.
  Thus this gadget is in stable state in $\vec{\alpha}$. \qed
\end{proof}

The following lemmas are easy to show and are crucial to our NP-hardness proof. Let $c = X_i \vee X_j \vee X_k$ be any clause in $\phi$.
In our proofs below, we are omitting the superscript $c$ from vertex names for the sake of readability. Recall that $\vec{\alpha} \in \{0, \pm 1\}^n$ is a witness of
our desired popular matching $M$. 
\begin{lemma}
  \label{lemma1}
    For every clause $c$ in $\phi$, at least two of the three level~2 gadgets corresponding to $c$ have to be in dominant state in $\vec{\alpha}$.
\end{lemma}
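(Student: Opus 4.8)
The plan is to exploit the first observation (every level~3 gadget is in dominant state) together with dual feasibility of $\vec{\alpha}$ on the \emph{non-popular} edges that join the level~3 gadget to its three level~2 gadgets. Fix a clause $c = X_i \vee X_j \vee X_k$ and drop the superscript $c$. Since the level~3 gadget is dominant, its unstable vertices satisfy $\alpha_{s_0} = \alpha_{t_0} = -1$ by Lemma~\ref{prop0}, and $M$ matches all eight of its vertices inside the gadget, because no popular edge leaves level~3 (Lemma~\ref{lem:separate}) and all these vertices lie in $X \cup Y$. A short adjacency check shows the gadget has exactly three perfect matchings, and they are \emph{coupled}: setting $M(s_0) = t_a$ forces $s_a$ to take $t_0$, i.e.\ $M(t_0) = s_a$ with the \emph{same} index $a \in \{1,2,3\}$. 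So it suffices to show that for each value of $a$, at least two of the three level~2 gadgets are forced into dominant state.

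First I would record the feasibility inequalities. The edges $(s_0,q_0)$ and $(s_0,q_3)$ (incident to $s_0$) and $(t_0,p_4)$ and $(t_0,p_7)$ (incident to $t_0$) all lie in $E_H$, so $\vec{\alpha}$ must satisfy $\alpha_{s_0} + \alpha_v \ge \wt_M(s_0,v)$ and $\alpha_{t_0} + \alpha_v \ge \wt_M(t_0,v)$ for the respective level~2 endpoints $v \in \{q_0,q_3,p_4,p_7\}$. In each case that endpoint is matched by $M$ to its in-gadget partner (its only matchable popular neighbour) and ranks $s_0$, resp.\ $t_0$, below that partner, so it casts a $-1$ vote; hence each of these four weights depends only on $a$, through the single term $\vote_{s_0}(\,\cdot\,,t_a)$ or $\vote_{t_0}(\,\cdot\,,s_a)$. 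Reading off $s_0$'s order $t_1 \succ q_0 \succ t_2 \succ q_3 \succ t_3$ and $t_0$'s order $s_3 \succ p_7 \succ s_2 \succ p_4 \succ s_1$, I obtain that $\wt_M(s_0,q_0) = 0$ iff $a \in \{2,3\}$, that $\wt_M(s_0,q_3) = 0$ iff $a = 3$, that $\wt_M(t_0,p_7) = 0$ iff $a \in \{1,2\}$, and that $\wt_M(t_0,p_4) = 0$ iff $a = 1$; in every remaining case the weight is $-2$.

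Now, whenever the relevant weight equals $0$, the feasibility inequality (with $\alpha_{s_0} = \alpha_{t_0} = -1$) forces the level~2 endpoint's $\alpha$-value to be $\ge 1$, hence $= 1$ since $\vec{\alpha} \in \{0,\pm 1\}^n$; by Lemma~\ref{prop1} (and Lemma~\ref{lem:conn-comp}, which places a whole level~2 gadget in one component of $F_H$) this puts the entire gadget into dominant state. Tabulating the three cases: $a = 1$ forces gadgets $2$ and $3$ (via $p_4$ and $p_7$); $a = 2$ forces gadgets $1$ and $3$ (via $q_0$ and $p_7$); and $a = 3$ forces gadgets $1$ and $2$ (via $q_0$ and $q_3$). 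Thus in every case at least two of the three level~2 gadgets are dominant, which is the claim; note the single ``free'' gadget is exactly the one whose index equals $a$.

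The main obstacle is essentially bookkeeping rather than a conceptual leap: one must verify the coupling $M(s_0) = t_a \Leftrightarrow M(t_0) = s_a$ from the gadget's adjacency, and compute the four $\wt_M$ values correctly from the \emph{interleaved} preference lists of $s_0$ and $t_0$ (the interleaving of $q_0 \succ q_3$ into $s_0$'s list and of $p_7 \succ p_4$ into $t_0$'s list is precisely what controls which weight is $0$ versus $-2$). Once these signs are pinned down, the conclusion — that the dominant level~3 gadget pushes its $-1$ potentials outward along the slack edges and thereby pins down two of the three level~2 gadgets — is immediate.
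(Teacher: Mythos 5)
Your proof is correct and takes essentially the same route as the paper: both use observation~1 to get $\alpha_{s^c_0} = \alpha_{t^c_0} = -1$, split into the three cases $M(s^c_0)=t^c_a$, $M(t^c_0)=s^c_a$ for $a\in\{1,2,3\}$, and use the dual feasibility constraints on the cross edges $(s^c_0,q^c_0),(s^c_0,q^c_3),(t^c_0,p^c_4),(t^c_0,p^c_7)$ — whose weights are $0$ exactly when $s^c_0$ (resp.\ $t^c_0$) prefers the level~2 endpoint to its partner — to force $\alpha = 1$ on two level~2 vertices and hence, via Lemmas~\ref{lem:conn-comp} and~\ref{prop1}, two dominant gadgets. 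Your only additions (justifying via Lemma~\ref{lem:separate} that $s^c_0,t^c_0$ are matched inside the gadget, and verifying the coupling of the three perfect matchings) are details the paper leaves implicit, not a different argument.
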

\begin{proof}
  Let $c$ be any clause in $\phi$.
  We know from observation~1 that the level~3 gadget corresponding to $c$ is in {\em dominant} state in $\vec{\alpha}$. So $\alpha_{s_0} = \alpha_{t_0} = -1$.
  Also, one of the following three cases holds: (1)~$(s_0,t_1)$ and $(s_1,t_0)$ are in $M$, (2)~$(s_0,t_2)$ and $(s_2,t_0)$ are in $M$,
  (3)~$(s_0,t_3)$ and $(s_3,t_0)$ are in $M$.
  
  \begin{itemize}
  \item In case~(1), the vertex $t_0$ prefers $p_4$ and $p_7$ to its partner $s_1$ in $M$. Thus $\wt_M(p_4,t_0) = \wt_M(p_7,t_0) = 0$.
    Since $\alpha_{t_0} = -1$, we need to have $\alpha_{p_4} = \alpha_{p_7} = 1$ so that $\alpha_{p_4} + \alpha_{t_0} \ge \wt_M(p_4,t_0)$ and
    $\alpha_{p_7} + \alpha_{t_0} \ge \wt_M(p_7,t_0)$. Thus the middle and rightmost level~2 gadgets corresponding to $c$
  (see Fig.~\ref{level2:example}) have to be in dominant state in $\vec{\alpha}$.

  \item In case~(2), the vertex $t_0$ prefers $p_7$ to its partner $s_2$ in $M$ and the vertex $s_0$ prefers $q_0$ to its partner $t_2$ in $M$.
    Thus $\alpha_{p_7} = \alpha_{q_0} = 1$ so that $\alpha_{p_7} + \alpha_{t_0} \ge \wt_M(p_7,t_0)$ and $\alpha_{s_0} + \alpha_{q_0} \ge \wt_M(s_0,q_0)$.
    Thus the leftmost and rightmost level~2 gadgets corresponding to $c$ (see Fig.~\ref{level2:example}) have to be in dominant state in $\vec{\alpha}$.

  \item In case~(3), the vertex $s_0$ prefers $q_0$ and $q_3$ to its partner $t_3$ in $M$.
    Thus $\alpha_{q_0} = \alpha_{q_3} = 1$ so that $\alpha_{s_0} + \alpha_{q_0} \ge \wt_M(s_0,q_0)$ and $\alpha_{s_0} + \alpha_{q_3} \ge \wt_M(s_0,q_3)$.
    Thus the leftmost and middle level~2 gadgets corresponding to $c$ (see Fig.~\ref{level2:example}) have to be in dominant state in $\vec{\alpha}$. \qed
  \end{itemize}
\end{proof}

\begin{lemma}
  \label{lemma2}
  For any clause $c$ in $\phi$, {\em at least one} of the level~1 gadgets
  corresponding to variables in $c$ is in dominant state in $\vec{\alpha}$.
\end{lemma}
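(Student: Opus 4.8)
The plan is to leverage Lemma~\ref{lemma1} together with the structure of the level~2 gadgets and their edges to the level~1 (variable) gadgets. By Lemma~\ref{lemma1}, for the clause $c = X_i \vee X_j \vee X_k$, at least two of the three level~2 gadgets corresponding to $c$ are in dominant state in $\vec{\alpha}$. The key is to trace how a dominant-state level~2 gadget forces a dominant-state level~1 gadget through the ``second choice'' edges noted in the caption of Fig.~\ref{level2:example}: $p^c_2$'s second choice is $y_j$, $q^c_2$'s is $x_k$, $p^c_5$'s is $y_k$, $q^c_5$'s is $x_i$, $p^c_8$'s is $y_i$, $q^c_8$'s is $x_j$. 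So the leftmost level~2 gadget is linked to variables $X_j$ and $X_k$, the middle to $X_k$ and $X_i$, and the rightmost to $X_i$ and $X_j$; each variable is touched by exactly two of the three level~2 gadgets.

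First I would make precise what ``dominant state'' of a level~2 gadget forces on the incident variable gadgets. Suppose the leftmost level~2 gadget is in dominant state, so by Definition~\ref{def:stab-domn} we have $\alpha_{p^c_2} = \alpha_{q^c_2} = -1$ (and $\pm 1$ throughout the gadget). Consider the cross-level edge $(p^c_2, y_j)$: since $p^c_2$ prefers its top choice $q_0$ to its matched partner, and $y_j$ is $p^c_2$'s second choice, I would compute $\wt_M(p^c_2, y_j)$ under the constraint $\alpha_{p^c_2} + \alpha_{y_j} \ge \wt_M(p^c_2, y_j)$. The aim is to show that $\alpha_{p^c_2} = -1$ forces $\alpha_{y_j} = 1$, which (via Lemma~\ref{prop1}, since the whole variable gadget of $X_j$ lies in one component of $F_H$ by Lemma~\ref{lem:conn-comp}) puts the level~1 gadget for $X_j$ into dominant state. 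A symmetric argument through $q^c_2$ and its second choice $x_k$ handles $X_k$.

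Once this local implication is established, the combinatorial finish is a counting/covering argument. Each variable $X_i, X_j, X_k$ is incident to exactly two of the three level~2 gadgets of $c$; since at least two gadgets are dominant, I would argue by cases (or a pigeonhole-style observation) that at least one of the three variable gadgets receives a dominant-state signal. Concretely, if the two dominant level~2 gadgets are, say, the leftmost (touching $X_j, X_k$) and the middle (touching $X_k, X_i$), then $X_k$ is forced dominant; the other two case-pairs each share a common variable similarly. Thus in every case at least one level~1 gadget among those for $X_i, X_j, X_k$ is in dominant state in $\vec{\alpha}$.

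The main obstacle I anticipate is verifying the weight computation $\wt_M(p^c_2, y_j)$ and the forcing direction carefully. I must confirm that $p^c_2$'s preference of its top choice $q_0$ over its $M$-partner, combined with $y_j$'s votes, indeed yields $\wt_M(p^c_2, y_j)$ large enough that $\alpha_{p^c_2} = -1$ leaves no slack unless $\alpha_{y_j} = +1$; this requires knowing whether $y_j$ prefers its own $M$-partner to $p^c_2$ and pinning down $y_j$'s vote precisely. This is exactly the style of slack computation used in Lemma~\ref{lem:separate} and Lemma~\ref{lemma1}, so I expect it to go through, but the bookkeeping of which vertex is matched to what in the dominant configuration of the level~2 gadget is where an error could creep in, and I would double-check it against the two dominant matchings $N$ and $N'$ exhibited in the proof of Lemma~\ref{lem:conn-comp}.
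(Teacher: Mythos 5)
Your overall skeleton is the same as the paper's: invoke Lemma~\ref{lemma1}, then push the dominant state of a level~2 gadget down to a level~1 gadget through the cross edges $(p^c_2,y_j)$ and $(x_k,q^c_2)$, using the covering constraint $\alpha_u+\alpha_v\ge \wt_M(u,v)$. But your forcing step contains a genuine error. First, a minor one: Definition~\ref{def:stab-domn} only gives $\alpha_u\in\{\pm 1\}$ on the gadget; pinning down the signs $\alpha_{p^c_2}=\alpha_{q^c_2}=-1$ requires the extra argument that $\alpha_{q^c_0},\alpha_{p^c_1}\ge 0$ (they are neighbors of the unmatched $z'$, $z$), hence $=+1$, and then tightness of the popular edges $(p^c_2,q^c_0)$, $(p^c_1,q^c_2)$ via Lemma~\ref{prop0}. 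The serious error is the claim that $\alpha_{p^c_2}=-1$ forces $\alpha_{y_j}=1$ \emph{and}, ``symmetrically,'' $\alpha_{q^c_2}=-1$ forces $\alpha_{x_k}=1$. Neither forcing is unconditional: the constraint on $(p^c_2,y_j)$ has bite only when $\wt_M(p^c_2,y_j)=0$, i.e., only when $p^c_2$ is matched \emph{below} $y_j$, namely to $q^c_1$. Since $\alpha_{p^c_2}=\alpha_{q^c_2}=-1$ rules out $(p^c_2,q^c_2)\in M$, exactly two cases remain: either $(p^c_2,q^c_0),(p^c_0,q^c_2)\in M$, in which case $p^c_2$ has its top choice, $\wt_M(p^c_2,y_j)=-2$, and nothing is forced on $y_j$ --- but $q^c_2$ is matched to its third choice $p^c_0$, so $\wt_M(x_k,q^c_2)=0$ forces $\alpha_{x_k}=1$; or $(p^c_2,q^c_1),(p^c_1,q^c_2)\in M$, in which case $\alpha_{y_j}=1$ is forced but nothing is forced on $x_k$. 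So a dominant level~2 gadget forces a \emph{disjunction} (one of its two incident variable gadgets is dominant, determined by $M$'s restriction to the gadget), not both and not a predetermined one. Had your version been correct, it would in fact contradict Lemma~\ref{lemma3}.

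This breaks your pigeonhole finish: from ``leftmost dominant $\Rightarrow X_j$ or $X_k$ dominant'' and ``middle dominant $\Rightarrow X_k$ or $X_i$ dominant'' you cannot conclude that the shared variable $X_k$ is dominant --- the leftmost gadget may force $X_j$ while the middle forces $X_i$. The repair is immediate, though, and is exactly what the paper does: each disjunction by itself already names some variable of $c$ whose gadget is dominant, so a \emph{single} dominant level~2 gadget suffices for the lemma; combining the two disjunctions merely yields the paper's finer conclusion (either $X_k$ is dominant, or both $X_i$ and $X_j$ are). Once you replace your deterministic forcing by the case analysis above, your argument coincides with the paper's proof.
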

\begin{proof}
  We showed in Lemma~\ref{lemma1} that
  at least two of the three level~2 gadgets corresponding to $c$ are in dominant state in $\vec{\alpha}$. Assume without loss of generality that these
  are the leftmost gadget and middle gadget (see Fig.~\ref{level2:example}).

  In particular, we know from the proof of Lemma~\ref{lemma1} that
  $\alpha_{q_0} = \alpha_{q_3} = 1$. This also forces  $\alpha_{p_1} = \alpha_{p_4} = 1$. This is because $\alpha_{p_1}$ and $\alpha_{p_4}$ have to be
  non-negative since $p_1$ and $p_4$ are neighbors of the unmatched vertex $z$.

  As $q_0$ and $p_1$ are the most preferred neighbors of $p_2$ and $q_2$, we have $\wt_M(p_2,q_0) = \wt_M(p_1,q_2) = 0$.
  Since $(p_2,q_0)$ and $(p_1,q_2)$ are popular edges, it follows from Lemma~\ref{prop0} that
  $\alpha_{p_2} = \alpha_{q_2} = -1$. Thus either (i)~$(p_2,q_0)$ and $(p_0,q_2)$
  are in $M$ or (ii)~$(p_2,q_1)$ and $(p_1,q_2)$ are in $M$. This means that either $\wt_M(p_2,y_j) = 0$ or $\wt_M(x_k,q_2) = 0$. That is, either $\alpha_{y_j} = 1$
  or $\alpha_{x_k} = 1$.

  Similarly, $\wt_M(p_5,q_3) = \wt_M(p_4,q_5) = 0$ and we can conclude that $\alpha_{p_5} = \alpha_{q_5} = -1$.
  Thus either (i)~$(p_5,q_3)$ and $(p_3,q_5)$ are in $M$ or
  (ii)~$(p_5,q_4)$ and $(p_4,q_5)$ are in $M$. This means that either $\wt_M(p_5,y_k) = 0$ or $\wt_M(x_i,q_5) = 0$. That is, either $\alpha_{y_k} = 1$ or $\alpha_{x_i} = 1$.

  Thus either (i)~the gadgets corresponding to variables $X_i$ and $X_j$ are in dominant state or
  (ii)~the gadget corresponding to $X_k$ is in dominant state in $\vec{\alpha}$.
  Thus at least {\em one} of the level~1 gadgets corresponding to variables in $c$ is in dominant state in $\vec{\alpha}$. \qed
\end{proof}

\begin{lemma}
  \label{lemma3}
   For any clause $c$ in $\phi$, {\em at most one} of the level~1 gadgets corresponding to variables in $c$ is
  in dominant state in $\vec{\alpha}$.
\end{lemma}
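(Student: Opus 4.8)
The plan is to prove the contrapositive flavor of Lemma~\ref{lemma2}: I want to show that if two of the three level~1 gadgets corresponding to a clause $c = X_i \vee X_j \vee X_k$ were simultaneously in dominant state in $\vec{\alpha}$, then the constraints imposed by the three level~2 gadgets (via their connections to the level~1 and level~3 gadgets) cannot be simultaneously satisfied, contradicting the fact that $\vec{\alpha}$ is a witness. Since Lemma~\ref{lemma2} already guarantees at least one level~1 gadget is dominant, establishing ``at most one'' will pin down exactly one dominant gadget per clause, which is precisely the 1-in-3 condition we are reducing from. The key machinery is the same as in Lemmas~\ref{lemma1} and~\ref{lemma2}: a level~1 gadget for $X_m$ is in dominant state iff $\alpha_{x_m} = \alpha_{y_m} = 1$ (and $\alpha_{x'_m} = \alpha_{y'_m} = -1$), and these $\alpha$-values propagate through the second-choice edges connecting level~1 vertices to level~2 vertices (e.g. $p^c_2$'s second choice is $y_j$, $q^c_2$'s is $x_k$, etc., as recorded in the caption of Fig.~\ref{level2:example}).

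**The main chain of reasoning.** First I would record, for each level~2 gadget, the tight slackness condition that a dominant level~1 gadget forces. Concretely, if the gadget for $X_j$ is dominant then $\alpha_{y_j} = 1$; since $(p^c_2,y_j)$ is an edge with $p^c_2$ regarding $y_j$ as its second choice, and since $\alpha_{p^c_2}$ is constrained by the level~2/level~3 analysis of Lemma~\ref{lemma1} to equal $-1$ exactly when that level~2 gadget is dominant, the feasibility inequality $\alpha_{p^c_2} + \alpha_{y_j} \ge \wt_M(p^c_2, y_j)$ together with the edge-weight bookkeeping constrains which of $(p^c_2,q^c_0)$ vs.\ $(p^c_2,q^c_1)$ can lie in $M$. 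I would then argue that two dominant level~1 gadgets force the corresponding $\alpha$-values at the level~2 ``second-choice'' neighbors up to $+1$ on edges that cannot both be tight, because the level~2 gadget structure (together with observation~1 forcing all level~3 gadgets dominant, hence $\alpha_{s_0} = \alpha_{t_0} = -1$) only permits a bounded amount of ``$+1$ budget'' to be absorbed. The cleanest way to phrase this is to assume for contradiction that, say, gadgets $X_i$ and $X_j$ are both dominant, trace the forced values $\alpha_{y_j} = \alpha_{x_i} = \alpha_{y_i} = \alpha_{x_j} = 1$ through the level~2 edges, and show that the three level~2 gadgets would then all be forced dominant in a way that violates the constraint at $s_0$ or $t_0$ established in Lemma~\ref{lemma1} --- i.e., more than two level~2 gadgets cannot be ``saturated'' by the single level~3 gadget.

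**Where the obstacle lies.** The hard part will be the careful case analysis over which matching edges the level~2 gadgets actually use: each level~2 gadget admits the two dominant configurations described in Lemma~\ref{lem:conn-comp}, and the $\wt_M$ values on the level~1-to-level~2 bridging edges depend on this choice. I expect I will need to mirror the three-case split of Lemma~\ref{lemma1} (indexed by which of $(s_0,t_1),(s_0,t_2),(s_0,t_3)$ is in $M$) and, within each case, rule out the possibility that the two level~1 gadgets permitted to be dominant there can \emph{both} be dominant without forcing an infeasible third $+1$ somewhere in the level~2 gadget that has no matching $-1$ partner. The counting is exactly the ``1-in-3'' arithmetic: the single level~3 gadget supplies room for precisely two dominant level~2 gadgets, which in turn force exactly one dominant level~1 gadget, so a second dominant level~1 gadget overloads the system.

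**Verification step.** Finally I would double-check consistency with Lemma~\ref{lemma2}: the two lemmas together must be compatible (one gadget, not zero and not two), and I would confirm that the forced assignment $\alpha_{x_m} = \alpha_{y_m} = 1$ for the unique dominant $X_m$ does indeed satisfy $\sum_u \alpha_u = 0$ globally, so that no hidden feasibility conflict with Theorem~\ref{thm:witness} has been introduced. This global sum check is where I would confirm that the per-clause local arguments patch together into a genuine witness.
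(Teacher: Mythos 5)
Your proposal has a genuine gap: it tries to extract the ``at most one'' bound from the level~2 and level~3 gadgets, but those gadgets cannot enforce any upper bound. Every constraint they impose is a one-sided covering inequality $\alpha_u + \alpha_v \ge \wt_M(u,v)$, and such inequalities only become \emph{easier} to satisfy when more level~1 vertices $x_r,y_r$ carry $\alpha$-value $+1$. The forcing in Lemma~\ref{lemma2} runs strictly downward: a dominant level~2 gadget (with $\alpha_{p^c_2} = \alpha_{q^c_2} = -1$) forces $\alpha_{y_j}=1$ or $\alpha_{x_k}=1$, never the reverse. A dominant level~1 gadget imposes no constraint on $\alpha_{p^c_2}$ at all, since $\wt_M(p^c_2,y_j)\le 0$ while $\alpha_{y_j}=1$, so the edge $(p^c_2,y_j)$ is covered for any value of $\alpha_{p^c_2}$; your step ``two dominant level~1 gadgets force all three level~2 gadgets dominant'' therefore has no basis. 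Moreover, the budget you invoke --- ``the single level~3 gadget supplies room for precisely two dominant level~2 gadgets'' --- does not exist: Lemma~\ref{lemma1} is a \emph{lower} bound (at least two), and the covering constraints at $s^c_0,t^c_0$ are compatible with all three level~2 gadgets being dominant (e.g.\ $\alpha_{s^c_0}+\alpha_{q^c_0} = -1+1 = 0 \ge \wt_M(s^c_0,q^c_0)$, which is always $\le 0$ because $q^c_0$ prefers any partner in its own gadget to $s^c_0$). So the contradiction you aim for never materializes. Your final ``verification step'' (checking $\sum_u \alpha_u = 0$) is also beside the point here: in this lemma the witness $\vec{\alpha}$ is given, not constructed.

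The mechanism the paper actually uses --- and which your proposal never mentions --- is the \emph{level~0} gadgets. By observation~2 they are in stable state, so $\alpha_{a^c_t}=\alpha_{b^c_t}=0$. The leftmost level~0 gadget is matched either as $(a^c_1,b^c_1),(a^c_2,b^c_2)$, in which case $b^c_1$ prefers $x'_k$ to its partner and $\wt_M(x'_k,b^c_1)=0$, forcing $\alpha_{x'_k}\ge 0$; or as $(a^c_1,b^c_2),(a^c_2,b^c_1)$, in which case $a^c_1$ prefers $y'_j$ to its partner and $\wt_M(a^c_1,y'_j)=0$, forcing $\alpha_{y'_j}\ge 0$. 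On the other hand, tightness on the popular edges $(x_r,y'_r)$ and $(x'_r,y_r)$ (Lemma~\ref{prop0}) combined with $\alpha_{x_r}\ge 0,\alpha_{y_r}\ge 0$ (from the unmatched vertices $z,z'$) gives $\alpha_{y'_r}\le 0$ and $\alpha_{x'_r}\le 0$ for every variable. Hence $\alpha_{y'_j}=0$ or $\alpha_{x'_k}=0$, i.e.\ the gadget of $X_j$ or of $X_k$ is in stable state: the pair $\{X_j,X_k\}$ cannot both be dominant. The three level~0 gadgets of clause $c$ exclude the three pairs $\{X_j,X_k\}$, $\{X_k,X_i\}$, $\{X_i,X_j\}$ respectively, which is exactly ``at most one.'' In short, ``at least one'' comes from levels 2--3 and ``at most one'' comes from level~0; your proof tries to obtain both from the same side of the construction, and that cannot work.
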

\begin{proof}
  We know from observation~2 made at the start of this section that all the three level~0 gadgets corresponding to $c$ are in stable state in $\vec{\alpha}$.
  So $\alpha_{a_t} = \alpha_{b_t} = 0$ for $1 \le t \le 6$.
  Either (i)~$(a_1,b_1)$ and $(a_2,b_2)$ are in $M$ or (ii)~$(a_1,b_2)$ and $(a_2,b_1)$ are in $M$. So either $\wt_M(a_1,y'_j) = 0$ or $\wt_M(x'_k,b_1) = 0$.
  So either $\alpha_{y'_j} \ge 0$ or $\alpha_{x'_k} \ge 0$.

  Consider any variable $X_r$. Either $\{(x_r,y'_r),(x'_r,y_r)\} \subset M$ or $\{(x_r,y_r),(x'_r,y'_r)\} \subset M$. It follows from Lemma~\ref{prop0} that
  $\alpha_{x_r} + \alpha_{y'_r} = \wt_M(x_r,y'_r) = 0$ and  $\alpha_{x'_r} + \alpha_{y_r} = \wt_M(x'_r,y_r) = 0$.
  Also  due to the vertices $z$ and  $z'$, we have $\alpha_{x_r} \ge 0$ and $\alpha_{y_r} \ge 0$.
  Thus $\alpha_{y'_r} \le 0$ and $\alpha_{x'_r} \le 0$. 

  Hence we can conclude that either $\alpha_{y'_j} = 0$ or $\alpha_{x'_k} = 0$. In other words, either the gadget corresponding to $X_j$ or the gadget
  corresponding to $X_k$ is in stable state. Similarly, by analyzing the level~0 gadget on vertices $a_t,b_t$ for $t = 3,4$, we can show that
  either the gadget corresponding to $X_k$ or the gadget corresponding to $X_i$ is in stable state.
  Also, by analyzing the level~0 gadget on vertices $a_t,b_t$ for $t = 5,6$,  either the gadget corresponding to $X_i$ or the gadget corresponding to $X_j$
  is in stable state.

  Thus at least two of the three level~1 gadgets corresponding to variables in clause $c$ are in stable state in $\vec{\alpha}$. Hence at most one of these three gadgets
   is in dominant state in $\vec{\alpha}$. \qed
\end{proof}

\begin{lemma}
  \label{thm1}
If $H$ admits a desired popular matching then $\phi$ has a 1-in-3 satisfying assignment.
\end{lemma}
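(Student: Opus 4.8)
The plan is to translate the structural lemmas already established (Lemmas~\ref{lemma1}, \ref{lemma2}, \ref{lemma3}) into a satisfying assignment for $\phi$. The guiding idea is that the \emph{dominant/stable state} of each level~1 gadget in $\vec{\alpha}$ should encode the truth value of the corresponding variable: I would set variable $X_r$ to \textbf{true} exactly when the level~1 gadget corresponding to $X_r$ is in dominant state in $\vec{\alpha}$, and to \textbf{false} when it is in stable state. Note that by Lemma~\ref{prop1} applied to the level~1 gadget (which is a single connected component of $F_H$ by Lemma~\ref{lem:conn-comp}), every such gadget is in exactly one of these two states, so this assignment is well-defined.

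With this encoding in hand, the verification is immediate from the two bracketing lemmas. First I would invoke Lemma~\ref{lemma2}, which guarantees that for every clause $c = X_i \vee X_j \vee X_k$, \emph{at least one} of the three level~1 gadgets corresponding to its variables is in dominant state; under my encoding this says at least one of $X_i, X_j, X_k$ is set true, so each clause has at least one true literal. Next I would invoke Lemma~\ref{lemma3}, which gives that \emph{at most one} of these three gadgets is in dominant state; this says at most one of $X_i, X_j, X_k$ is true. Combining the two, \emph{exactly one} variable per clause is true, which is precisely the 1-in-3 satisfiability condition.

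The only setup step requiring a little care is justifying that a desired popular matching $M$ indeed comes equipped with a witness $\vec{\alpha} \in \{0,\pm 1\}^n$ in the required form; but this is supplied by the earlier material (Lemma of \cite{Kav16} guaranteeing a $\{0,\pm1\}^n$ witness, together with the opening paragraph of Section~\ref{sec:thm3-proof} fixing $M$ and $\vec\alpha$). So the proof of Lemma~\ref{thm1} is essentially a bookkeeping argument that stitches together the three preceding lemmas through the variable-to-state dictionary.

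Honestly, there is no real obstacle here: all the combinatorial work has been front-loaded into Lemmas~\ref{lemma1}--\ref{lemma3}, and Lemma~\ref{thm1} is the payoff that reads off the assignment. The one point I would be careful to state explicitly is the well-definedness of the truth assignment, i.e.\ that each level~1 gadget is in a \emph{single} state (stable or dominant) so that ``$X_r$ is true'' is unambiguous; this follows from Lemma~\ref{prop1} since each level~1 gadget is one connected component of $F_H$. After that, the equivalences ``dominant $\leftrightarrow$ true'' turn Lemmas~\ref{lemma2} and~\ref{lemma3} directly into the 1-in-3 clause condition, completing the forward direction of Theorem~\ref{thm:redn}.
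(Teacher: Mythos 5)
Your proof is correct and matches the paper's own argument essentially verbatim: the same truth assignment (dominant state $\leftrightarrow$ \textsf{true}) combined with Lemmas~\ref{lemma2} and~\ref{lemma3} to get exactly one true variable per clause. Your added remark on well-definedness via Lemma~\ref{prop1} is a fine (implicit in the paper) clarification, but nothing substantive differs.
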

\begin{proof}
  Let $M$ be a desired popular matching in $H$. That is, $M$ matches all in $X \cup Y$ and leaves $z,z'$ unmatched. Let $\vec{\alpha} \in \{0,\pm 1\}^n$
  be a witness of $M$.
  
  We will now define a $\mathsf{true}$/$\mathsf{false}$ assignment for the variables in $\phi$.
  For each variable $X_r$ in $\phi$ do:
  \begin{itemize}
    \item set $X_r$ to $\mathsf{true}$ if its level~1 gadget is in dominant state in $\vec{\alpha}$, i.e.,
      if $\alpha_{x_r} = \alpha_{y_r} = 1$ or equivalently, $(x_r,y'_r)$ and $(x'_r,y_r)$ are in $M$.
    \item else  set  $X_r$ to $\mathsf{false}$, i.e., here $\alpha_{x_r} = \alpha_{y_r} = 0$ or equivalently, $(x_r,y_r)$ and $(x'_r,y'_r)$ are in $M$.
  \end{itemize}
      
  Since $M$ is our desired popular matching, 
  it follows from Lemmas~\ref{lemma2} and \ref{lemma3} that for every clause $c$ in $\phi$, {\em exactly one} of the three level~1 gadgets
  corresponding to variables in $c$ is in dominant state in $\vec{\alpha}$. That is, for each clause $c$ in $\phi$, exactly one of the three
  variables in $c$ is set to $\mathsf{true}$. \qed 
\end{proof}

\subsection{The converse}

Suppose $\phi$ admits a 1-in-3 satisfying assignment. We will now use this assignment to construct a desired popular matching $M$ in $H$.
For each variable $X_r$ in $\phi$ do:
\begin{itemize}
\item if $X_r = \mathsf{true}$ then include the edges $(x_r,y'_r)$ and $(x'_r,y_r)$ in $M$;
\item else include the edges $(x_r,y_r)$ and $(x'_r,y'_r)$ in $M$.
\end{itemize}
  
Consider a clause $c = X_i \vee X_j \vee X_k$. We know that exactly one of $X_i,X_j,X_k$ is set to $\mathsf{true}$ in our assignment.
Assume without loss of generality that $X_j = \mathsf{true}$. 

We will include the following edges in $M$ from all the gadgets corresponding to $c$.
Corresponding to the level~0 gadgets for $c$ (see Fig.~\ref{level0:example}), we do:
  \begin{itemize}
  \item  Add the edges  $(a^c_1,b^c_1), (a^c_2,b^c_2)$ from the leftmost gadget and $(a^c_5,b^c_6),(a^c_6,b^c_5)$
    from the rightmost gadget to $M$.

    We will select $(a^c_3,b^c_3),(a^c_4,b^c_4)$ from the middle gadget. (Note that we  could also have selected
    $(a^c_3,b^c_4),(a^c_4,b^c_3)$ from the middle gadget.) 
  \end{itemize}

Corresponding to the level~2 gadgets for $c$  (see Fig.~\ref{level2:example}), we do:
  \begin{itemize}
  \item Add the edges $(p^c_0,q^c_0),(p^c_1,q^c_2),(p^c_2,q^c_1)$ from the leftmost gadget,
    $(p^c_3,q^c_3),(p^c_4,q^c_4),(p^c_5,q^c_5)$ from the middle gadget, and
    $(p^c_6,q^c_8),(p^c_7,q^c_7),(p^c_8,q^c_6)$ from the rightmost gadget to $M$.
  \end{itemize}

 Since the leftmost and rightmost level~2 gadgets (see Fig.~\ref{level2:example}) are dominant, we will include $(s^c_0,t^c_2)$ and $(s^c_2,t^c_0)$ in $M$. Hence
  \begin{itemize}
     \item Add the edges $(s^c_0,t^c_2), (s^c_1,t^c_1),(s^c_2,t^c_0),(s^c_3,t^c_3)$ to $M$.
  \end{itemize}

We will show the following theorem now.

\begin{theorem}
  The matching $M$ described above is a popular matching.
\end{theorem}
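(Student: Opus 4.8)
The plan is to apply Theorem~\ref{thm:witness} directly, by exhibiting an explicit witness $\vec\alpha\in\{0,\pm1\}^n$ for $M$ and checking its three defining conditions. I would read $\vec\alpha$ off the gadget states that the construction builds: put $\alpha_z=\alpha_{z'}=0$ and $\alpha_u=0$ on every level~$0$ gadget; on the level~$1$ gadget of $X_r$ set $\alpha_{x_r}=\alpha_{y_r}=1$ and $\alpha_{x'_r}=\alpha_{y'_r}=-1$ when $X_r=\mathsf{true}$ and $\alpha\equiv 0$ when $X_r=\mathsf{false}$; label the stable level~$2$ gadget of each clause by $0$; and on each dominant level~$2$ gadget and on the (dominant) level~$3$ gadget use the $\{\pm1\}$-labelling determined by the matched and blocking edges inside the gadget. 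Here I use that a matched edge $(a,b)$ always has $\wt_M(a,b)=0$, so inside a dominant gadget its two endpoints must receive opposite signs $+1,-1$; the internal blocking edges (each of weight $+2$) together with the slack edges to the unmatched $z,z'$ then pin down every sign, in particular forcing $\alpha=+1$ on every gadget vertex adjacent to $z$ or $z'$.

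Two of the conditions are then immediate. For the sum, each matched edge of a dominant gadget contributes one $+1$ and one $-1$, and since every gadget vertex is matched inside its own gadget, each dominant gadget (like each stable gadget, and like $z,z'$) contributes $0$ to $\sum_u\alpha_u$, so the total vanishes. The self-loop inequalities hold because $\wt_M(u,u)=-1\le\alpha_u$ for matched $u$, while $\wt_M(z,z)=\wt_M(z',z')=0=\alpha_z=\alpha_{z'}$. For the edge inequalities I would first dispose of the easy cases: an edge inside a single gadget is covered by checking that $M$ restricted there is the stable, respectively dominant, matching certified by the local labelling (a routine finite check, as underlies Lemma~\ref{lem:conn-comp}); a cross-level edge $(a,b)$ with $\wt_M(a,b)=-2$ is automatic since $\alpha_a+\alpha_b\ge-2$; and an edge incident to $z$ or $z'$ has $\wt_M=0$ and needs only $\alpha_u\ge 0$ at its other endpoint, which holds because every neighbour of $z$ or $z'$ (the level~$0$ vertices, the $x_i,y_i$, and the $p^c_{3j+1},q^c_{3j}$) is labelled nonnegatively.

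The only remaining, and genuinely delicate, constraints are the inter-gadget edges with $\wt_M(a,b)=0$, namely the underlined edges running from a gadget down to the level below; the dangerous sub-cases are those where one end carries $-1$, which then demands $+1$ at the other end. This is exactly where the $1$-in-$3$ assignment enters, in the direction opposite to Lemmas~\ref{lemma1}--\ref{lemma3}, and it is the step I expect to be the main obstacle. I would run a short case analysis over a clause $c=X_i\vee X_j\vee X_k$, using the symmetry of the labelling to assume $X_j=\mathsf{true}$. The construction picks the dominant matchings of the two dominant level~$2$ gadgets (precisely the two that meet $X_j$) so that their active blocking edges $(p_2,y_j)$ and $(x_j,q_8)$ land on the level~$1$ vertices $y_j,x_j$ with $\alpha=1$, while their other outward edges meet top-choice-matched endpoints and hence have weight $-2$. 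Symmetrically, since the dominant level~$2$ gadgets are the leftmost and rightmost ones, the level~$3$ matching is taken to be $(s_0,t_2),(s_2,t_0)$ (the one compatible with that choice), whose active blocking edges $(s_0,q_0)$ and $(p_7,t_0)$ land on the $+1$ vertices $q_0,p_7$ of those gadgets; and the level~$0$ active blocking edges point at the primed vertices of the two \emph{false} variables, which are labelled $0$. The crux is this bookkeeping --- verifying that in every truth pattern each weight-$0$ cross edge reaches a $+1$ (or, at level~$0$, a $0$) endpoint --- and it is exactly where the exactly-one-true property of the assignment is consumed.
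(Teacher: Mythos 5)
Your proposal is correct and follows essentially the same route as the paper: you construct the identical witness $\vec\alpha$ (zero on stable-state gadgets and on $z,z'$, the $\pm 1$ labelling on true-variable, dominant level-2, and level-3 gadgets), dispose of the sum, self-loop, weight-$(-2)$, and $z/z'$ constraints exactly as the paper does, and resolve the weight-$0$ cross-level edges by the same case analysis (WLOG $X_j=\mathsf{true}$), correctly identifying that $(p_2,y_j)$, $(x_j,q_8)$, $(s_0,q_0)$, $(p_7,t_0)$ land on $+1$ endpoints and the level-0 weight-$0$ edges land on $0$-labelled primed vertices of false variables. The only quibble is terminological: the cross-level edges you call ``active blocking edges'' have weight $0$, not $+2$, though your covering requirement for them is the right one.
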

\begin{proof}
  We will prove $M$'s popularity by describing a witness $\vec{\alpha} \in \{0,\pm 1\}^n$. That is, $\sum_{u\in X'\cup Y'} \alpha_u$ will be 0 and every edge will be
  covered by the sum of $\alpha$-values of its endpoints, i.e., $\alpha_u + \alpha_v \ge \wt_M(u,v)$ for all edges $(u,v)$ in $H$. We will also have
  $\alpha_u \ge \wt_M(u,u)$ for all vertices $u$.

  Set $\alpha_z = \alpha_{z'} = 0$. Also set $\alpha_u = 0$ for all vertices $u$ in gadgets that are in {\em stable} state. That is, there are 
  {\em no blocking edges} to $M$ in these gadgets. This includes all level~0 gadgets,
  and the gadgets in level~1 that correspond to variables set to $\mathsf{false}$, and also the level~2 gadgets in stable state, i.e.,
  such as the gadget with vertices $p^c_3,q^c_3,p^c_4,q^c_4,p^c_5,q^c_5$ (the middle gadget in Fig.~\ref{level2:example}) since we assumed $X_j = \mathsf{true}$.

  \smallskip

  For every variable $X_r$ assigned to $\mathsf{true}$: set $\alpha_{x_r} = \alpha_{y_r} = 1$ and $\alpha_{x'_r} = \alpha_{y'_r} = -1$.
  For every clause, consider the level~2 gadgets corresponding to this clause that are in dominant state:
  for our clause $c$, these are the leftmost and rightmost gadgets in Fig.~\ref{level2:example} (since we assumed $X_j = \mathsf{true}$).

  Recall that we included in $M$ the edges $(p^c_0,q^c_0),(p^c_1,q^c_2),(p^c_2,q^c_1)$ from the leftmost gadget.
  We will set $\alpha_{q^c_0} = \alpha_{p^c_1} = \alpha_{q^c_1} = 1$ and $\alpha_{p^c_0} = \alpha_{p^c_2} = \alpha_{q^c_2} = -1$.
  We also included in $M$ the edges $(p^c_6,q^c_8),(p^c_7,q^c_7),(p^c_8,q^c_6)$ from the rightmost gadget.
  We will set $\alpha_{p^c_6} = \alpha_{q^c_6} = \alpha_{p^c_7} = 1$ and $\alpha_{q^c_7} = \alpha_{p^c_8} = \alpha_{q^c_8} = -1$.

  In the level~3 gadget corresponding to $c$, we included the edges $(s^c_0,t^c_2), (s^c_1,t^c_1),(s^c_2,t^c_0)$, $(s^c_3,t^c_3)$ in $M$.
  We will set $\alpha_{t^c_1} = \alpha_{s^c_2} = \alpha_{t^c_2} = \alpha_{s^c_3} = 1$ and $\alpha_{s^c_0} = \alpha_{t^c_0} = \alpha_{s^c_1} = \alpha_{t^c_3} = -1$.

  The claim below shows that $\vec{\alpha}$ is indeed a valid witness to $M$. Thus $M$ is a popular matching. \qed
 \end{proof}

 \begin{claim}
   The vector $\vec{\alpha}$ defined above is a witness to $M$.
 \end{claim}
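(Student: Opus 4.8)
The plan is to verify, gadget by gadget, that the explicitly defined vector $\vec{\alpha}$ satisfies the two witness conditions from Theorem~\ref{thm:witness}: that $\alpha_u + \alpha_v \ge \wt_M(u,v)$ for every edge $(u,v) \in E_H$ and $\alpha_u \ge \wt_M(u,u)$ for every vertex, and that $\sum_{u} \alpha_u = 0$. The sum-to-zero condition I would dispatch first and globally: within each gadget the construction pairs $+1$'s against $-1$'s (e.g. in each dominant level~2 gadget three vertices get $+1$ and three get $-1$, in each level~3 gadget four get $+1$ and four get $-1$, in each dominant level~1 gadget $x_r,y_r$ get $+1$ while $x'_r,y'_r$ get $-1$), and all stable-state gadgets together with $z,z'$ contribute $0$. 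So $\sum_u \alpha_u = 0$ follows by a gadget-local count.

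The bulk of the argument is the edge-covering inequality, which I would organize by the location of the edge. First, edges internal to a stable-state gadget: here all $\alpha$-values are $0$, so I must check $\wt_M(u,v) \le 0$ for every such edge, i.e. that these gadgets contain no blocking edge to $M$ — this is exactly the men/women-optimal stable configuration chosen in the construction, so it is immediate. Second, self-loops: $\alpha_u \ge \wt_M(u,u)$, where $\wt_M(u,u) = -1$ for matched $u$ and $=0$ for unmatched $u$; since $M$ matches all of $X \cup Y$ and only $z,z'$ are unmatched (with $\alpha_z = \alpha_{z'} = 0$), this holds everywhere. Third, edges internal to a dominant-state gadget: I would go through each of the level~1, level~2, and level~3 dominant gadgets and check, edge by edge against the fixed preference lists in Figures~\ref{level1:example}--\ref{level2:example} and the level~3 list, that the matched edges get $\alpha_u + \alpha_v = \wt_M(u,v)$ (tight, as forced by Lemma~\ref{prop0}) and the non-matched edges are slack. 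The crucial design feature to exploit is that the $+1$ labels were placed precisely on the endpoints of blocking edges (the vertices $q^c_0, q^c_1, p^c_1$ etc.), so that a blocking edge $(u,v)$ with $\wt_M(u,v)=+2$ has $\alpha_u = \alpha_v = 1$ and is covered with equality.

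The delicate part — and what I expect to be the main obstacle — is the inter-level edges, namely the edges leaving a gadget: the underlined neighbors such as $(p^c_2, y_j)$, $(x_k, q^c_2)$, $(s^c_0, q^c_0)$, $(s^c_0, q^c_3)$, $(a^c_1, y'_j)$, and the edges incident to $z,z'$. For these I must confirm $\alpha_u + \alpha_v \ge \wt_M(u,v)$ holds with the chosen values; the danger is an edge between a $-1$-vertex and a $-1$-vertex (giving left side $-2$) that is not negative to $M$ (giving right side $> -2$). Lemma~\ref{lem:separate} is the key enabler here: it already guarantees no such edge is popular, which morally says these edges are negative or slack; concretely I would check that for each inter-level edge either one endpoint carries $\alpha = +1$, or both endpoints strictly prefer their $M$-partners so that $\wt_M(u,v) = -2$. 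The consistency between the $\mathsf{true}$/$\mathsf{false}$ assignment and which level~1, level~2 gadgets are dominant (as orchestrated in Lemmas~\ref{lemma1}--\ref{lemma3}) is exactly what makes these cross-edges come out covered; for instance, since $X_j = \mathsf{true}$ forces $\alpha_{y_j} = \alpha_{x_j} = 1$, the edge $(p^c_2, y_j)$ is covered even when $p^c_2$ carries $-1$. I would therefore present this final case by fixing the representative clause $c = X_i \vee X_j \vee X_k$ with $X_j = \mathsf{true}$, listing the finitely many cross-edge types, and checking each against the assigned $\alpha$-values, treating the edges at $z,z'$ last (these are the easiest, since $\alpha_z = \alpha_{z'} = 0$ and every neighbor of $z$ in $X$ is matched to a preferred partner, making $\wt_M \le 0$).
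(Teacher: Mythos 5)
Your proposal is correct and follows essentially the same route as the paper's own proof: sum-to-zero from the paired $\pm 1$ labels, intra-gadget edges handled by the observation that every blocking edge to $M$ has both endpoints labelled $+1$, and a finite case check of the inter-level edges for a representative clause with $X_j = \mathsf{true}$, each such edge being either negative to $M$ or tight (e.g.\ $(p^c_2,y_j)$ and $(s^c_0,q^c_0)$ have $\wt_M = 0 = \alpha_u + \alpha_v$, while $(a^c_1,y'_j)$, $(x_k,q^c_2)$, $(s^c_0,q^c_3)$, $(p^c_4,t^c_0)$ are negative). Two cosmetic remarks: the paper verifies the cross-edges directly without appealing to Lemma~\ref{lem:separate}, and your dichotomy ``one endpoint carries $+1$ or the edge is negative'' should really be ``the edge is negative or $\wt_M(u,v) \le \alpha_u + \alpha_v$'' (a single $+1$ endpoint would not cover a blocking edge), but since no inter-level edge is blocking with respect to $M$, your checks go through exactly as in the paper.
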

 \begin{proof}
  For any edge $(u,v) \in M$, we have $\alpha_u + \alpha_v = 0$, thus $\sum_{u \in X'\cup Y'}\alpha_u = 0$. For any neighbor $v$ of $z$ or $z'$, we have
  $\alpha_v \ge 0$. Thus all edges incident to $z$ or $z'$ are covered by the sum of $\alpha$-values of their endpoints.
  It is also easy to see that for every intra-gadget edge $(u,v)$, we have $\alpha_u + \alpha_v \ge \wt_M(u,v)$. In particular, the endpoints of
  every {\em blocking edge} to $M$ have their $\alpha$-value set to 1. When $X_j = \mathsf{true}$, in the gadgets involving clause $c$,
  $(x_j,y_j),(p^c_1,q^c_1),(p^c_6,q^c_6),(s^c_2,t^c_2)$ are blocking edges to $M$.

  \smallskip
  
  So we will now check that the edge covering constraint holds for all edges $(u,v)$  where $u$ and $v$ belong to different levels. Consider edges in $H$ between a
  level~0 gadget and
  a level~1 gadget. When $X_j = \mathsf{true}$, the edges $(a^c_1,y'_j)$ and $(x'_j,b^c_5)$ are most interesting as they have one endpoint in a gadget in stable state
  and another endpoint in a gadget in dominant state.
  
  Observe that both these edges are {\em negative} to $M$. This is because $a^c_1$ prefers its partner $b^c_1$
  to $y'_j$ and $y'_j$ prefers its partner $x_j$ to $a^c_1$. Thus $\wt_M(a^c_1,y'_j) = -2 < \alpha_{a^c_1} + \alpha_{y'_j}  = 0 - 1$.
  Similarly, $b^c_5$ prefers its partner $a^c_6$ to $x'_j$ and $x'_j$ prefers its partner $y_j$ to $b^c_5$.
   Thus  $\wt_M(x'_j,b^c_5) = -2 < \alpha_{x'_j}  + \alpha_{b^c_5} = - 1 + 0$.

   \smallskip

   We will now consider edges in $H$ between a level~1 gadget and a level~2 gadget. 
   We have $\wt_M(p^c_2,y_j) = 0$ since $p^c_2$ prefers $y_j$ to its partner $q^c_1$ while  $y_j$ prefers its partner $x'_j$ to $p^c_2$. We have
   $\alpha_{p^c_2} + \alpha_{y_j} = -1 + 1 = \wt_M(p^c_2,y_j) = 0$. The edge $(x_k,q^c_2)$ is negative to $M$ and so this is 
   covered by the sum of $\alpha$-values of its endpoints.
   Similarly, $(p^c_8,y_i)$ is negative to $M$ while $\wt_M(x_j,q^c_8) = 0 = 1 - 1 = \alpha_{x_j} + \alpha_{q^c_8}$.
   We have $\wt_M(p^c_5,y_k) = 0$ and $\alpha_{p^c_5} = \alpha_{y_k} = 0$. Similarly, 
   $\wt_M(x_i,q^c_5) = 0$ and $\alpha_{x_i} = \alpha_{q^c_5} =  0$. Thus all these edges are covered.

   \smallskip
   
   We will now consider edges in $H$ between a level~2 gadget and a level~3 gadget. These edges are $(s^c_0,q^c_0), (s^c_0,q^c_3), (p^c_7,t^c_0), (p^c_4,t^c_0)$.
   We have $\wt_M(s^c_0,q^c_0) = 0$ and $\alpha_{s^c_0} = -1, \alpha_{q^c_0} = 1$, so this edge is covered. Similarly, $\wt_M(p^c_7,t^c_0) = 0$ and
   $\alpha_{p^c_7} = 1,\alpha_{t^c_0} = -1$.  The edges $(s^c_0,q^c_3)$ and $(p^c_4,t^c_0)$ are negative to $M$, so they are also covered.  Thus it can be checked
   that $\vec{\alpha}$ is a witness for $M$. \qed
\end{proof}

 Thus $H$ admits a desired popular matching if and only if $\phi$ has a 1-in-3 satisfying assignment.
 This completes the proof of Theorem~\ref{thm:redn}.

 \section{Dominant matchings}
\label{sec:domn}
    Recall that a popular matching $M$ is {\em dominant} if $M$ is more popular than every larger matching. 
  Observe that every popular  matching $M$ in our roommates instance $G = (V,E)$ is a max-size matching:
  this is because $M$ matches all vertices in $G$ except the vertex $z$ (by Lemma~\ref{new-lemma1}).
  Thus every popular  matching in $G$ is dominant and so it follows from Theorem~\ref{main-thm} that the dominant matching problem in $G$ is NP-hard.
  
  Note that the instance $G$ does not admit a stable matching. This is due to the gadget $D = \{d_0,d_1,d_2,d_3\}$.
  However the instance $G_0 = G \setminus D$ admits stable matchings. It is easy to see that a stable matching in $G_0$ matches all vertices in $X \cup Y$
  except the vertices $s^c_0,t^c_0$ for all clauses $c$.
  
  \begin{lemma}
     A popular matching $N$ in $G_0$ is dominant if and only the set of vertices matched in $N$ is $X \cup Y$.
  \end{lemma}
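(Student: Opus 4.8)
The plan is to prove the two directions separately, the reverse one being immediate and the forward one carrying the real content.

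\medskip

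\noindent\emph{Reverse direction.} Suppose the set of vertices matched by $N$ is exactly $X \cup Y$. Then $N$ leaves only $z$ unmatched, and since $|V(G_0)| = |X \cup Y| + 1$ is odd, $N$ is a maximum-cardinality matching in $G_0$. In particular there is no matching $M'$ with $|M'| > |N|$, so the defining condition of dominance, namely $\Delta(N,M') > 0$ for every larger $M'$, holds vacuously. As $N$ is popular by hypothesis, it is dominant.

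\medskip

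\noindent\emph{Forward direction.} Here I would work in the bipartite instance $H$ of Section~\ref{sec:thm3-proof}, where a popular matching of $G_0$ leaving $z$ unmatched corresponds to a popular matching of $H$ leaving both $z$ and $z'$ unmatched; fix a witness $\vec\alpha \in \{0,\pm1\}^n$ of $N$ (Theorem~\ref{thm:witness}). Using $\Delta(M',N) = \wt_N(\tilde{M'})$ together with $\sum_u \alpha_u = 0$ and the covering inequalities $\wt_N(u,v) \le \alpha_u + \alpha_v$ and $\wt_N(u,u) \le \alpha_u$, every perfect matching $\tilde{M'}$ of $\tilde H$ satisfies $\wt_N(\tilde{M'}) \le \sum_{(u,v)\in\tilde{M'}}(\alpha_u+\alpha_v) = \sum_u\alpha_u = 0$, with equality iff every edge of $\tilde{M'}$ (including the self-loops on the vertices left unmatched by $M'$) is \emph{tight}, i.e.\ attains $\alpha_u+\alpha_v$. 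Thus $N$ fails to be dominant exactly when some strictly larger matching $M'$ is all-tight, in which case $\Delta(N,M')=0$. So it suffices to show that if $N$ does not match all of $X\cup Y$, such an $M'$ exists.

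\medskip

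\noindent To build it I would first argue that $z$ is left unmatched by any popular matching of $G_0$: if $z$ were matched to a neighbour $u$, then, going through $z$'s neighbour types (a level~$1$ vertex $x_i,y_i$, or a level~$0$/level~$2$ endpoint such as $a^c_1,b^c_1,p^c_1,q^c_0$), $u$ strictly prefers some gadget-partner $w$ to $z$, and one checks from the listed preferences that $w$ in turn prefers $u$ to its assignment, so $(u,w)$ blocks and the local reroute that frees $z$ defeats $N$ --- contradicting popularity, hence dominance. With $z$ (and $z'$) unmatched and $|X \cup Y|$ even, ``matched set $\ne X \cup Y$'' forces at least two unmatched vertices of $X \cup Y$; by Lemma~\ref{lem:separate} no popular edge crosses levels, so by Lemma~\ref{lem:conn-comp} these vertices lie inside gadgets, and an unmatched gadget is not internally perfectly matched by $N$. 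Every level-$i$ gadget admits a perfect internal matching along popular edges (for level~$3$, e.g.\ $(s^c_0,t^c_1),(s^c_1,t^c_0),(s^c_2,t^c_2),(s^c_3,t^c_3)$; the other levels are handled by the matchings exhibited in Lemma~\ref{lem:conn-comp}); by Lemma~\ref{prop0} these popular edges are tight. Replacing $N$ on each not-fully-matched gadget by such a perfect internal matching changes $N$ only inside the (level-separated) gadgets, matches all of $X\cup Y$, leaves only $z,z'$ unmatched (whose self-loops are tight since $\alpha_z=\alpha_{z'}=0$ by Lemma~\ref{prop0}), and uses only tight edges. This yields an all-tight $M'$ with $|M'|>|N|$ and $\Delta(N,M')=0$, contradicting dominance; hence $N$ matches all of $X\cup Y$.

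\medskip

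\noindent The main obstacle is the structural claim that $z$ can never be matched in a popular matching of $G_0$: this is what rules out a spurious maximum matching that pairs $z$ and leaves a single $X\cup Y$ vertex unmatched (which would otherwise be vacuously dominant), and it requires a careful neighbour-by-neighbour verification from the exact gadget preference lists, exhibiting in each case an uncovered blocking edge. A secondary but routine point is confirming that each gadget type possesses a perfect internal matching consisting of popular edges, so that the reroute stays tight; this follows from the explicit stable and dominant matchings used in Lemma~\ref{lem:conn-comp}.
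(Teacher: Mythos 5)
Your reverse direction coincides with the paper's and is fine: if $N$ matches exactly $X \cup Y$ it is a maximum matching of the odd-order graph $G_0$, so dominance holds vacuously. The problem is in the forward direction, which you hang on the claim that $z$ is unmatched in \emph{every} popular matching of $G_0$ --- a claim you do not prove, and whose sketched justification is invalid. The mechanism you propose (find a blocking edge $(u,w)$ with $u = N(z)$, then do ``the local reroute that frees $z$'') does not defeat $N$: the reroute $N' = N \setminus \{(z,u),(w,N(w))\} \cup \{(u,w)\}$ gains the votes of $u$ and $w$ but loses the votes of $z$ and of $w$'s old partner, so $\Delta(N',N) = 0$; moreover $N'$ is \emph{smaller} than $N$, so this contradicts neither popularity (popular matchings may well have blocking edges) nor dominance (which only constrains comparisons with larger matchings). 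Even the existence of such a blocking edge is not automatic --- it depends on what $w$ is matched to in $N$, and freeing $z$'s partner triggers a cascade of forced rearrangements across gadgets that your sketch does not control. Since this claim is precisely what rules out the ``spurious'' scenario of a popular matching that pairs $z$ and exposes a single vertex of $X \cup Y$ (which would be maximum, hence vacuously dominant, and would falsify the lemma), the gap is genuine; you flagged it yourself as the main obstacle but did not close it.

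The paper never proves that $z$ is unmatched; it dissolves the issue by a pairing argument. Every popular matching of $G_0$ matches all stable vertices~\cite{HK11}, and the only unstable vertices of $X \cup Y$ are the $s^c_0, t^c_0$. For each clause $c$ these two vertices lie in the same connected component of the popular subgraph $F_H$ (Lemma~\ref{lem:conn-comp}), so by Lemma~\ref{prop1} they are matched or unmatched \emph{together}. Hence ``matched set $\neq X \cup Y$'' forces both $s^c_0$ and $t^c_0$ to be exposed for some clause $c$ --- a single exposed vertex of $X \cup Y$, with or without $z$ matched, is impossible --- and then $(s^c_1,t^c_1) \in N$ yields the augmenting path $\rho = s^c_0$-$t^c_1$-$s^c_1$-$t^c_0$, along which $N$ merely ties with the larger matching $N \oplus \rho$, killing dominance. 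Your all-tight construction (repairing each deficient gadget by a perfect internal matching of popular edges, which are tight against $N$'s witness by Lemma~\ref{prop0}) is a legitimate, genuinely different way of exhibiting a larger matching that ties with $N$, but it only becomes a proof once you replace your parity-via-$z$ step by the paper's pairing argument (or genuinely prove the $z$ claim, which your current argument does not do).
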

  \begin{proof}
    Let $N$ be any popular matching in $G_0$. Any popular matching has to match all stable vertices in $G_0$ (those matched in any stable matching)~\cite{HK11},
    thus $N$ matches all stable vertices in $G_0$. Suppose some unstable vertex in $X \cup Y$ (say, $s^c_0$) is left unmatched in $N$. 
    We claim that $t^c_0$ also has to be left unmatched in $N$. Since $s^c_1$ and $t^c_1$ have no other neighbors, the edge $(s^c_1,t^c_1) \in N$
    and so there is an augmenting path
    $\rho = s^c_0$-$t^c_1$-$s^c_1$-$t^c_0$ with respect to $N$. Observe that $N$ is {\em not} more popular than $N \oplus \rho$, a larger matching.
    Thus $N$ is not a dominant matching in $G_0$.
    
    In order to justify that $t^c_0$ also has to be left unmatched in $N$,
    let us view $N$ as a popular matching in $H$.
    We know that  $s^c_0$ and $t^c_0$ belong to the same connected component in the popular
    subgraph $F_H$ (by Lemma~\ref{lem:conn-comp}). So if $s^c_0$ is left unmatched in $N$, then $t^c_0$ is also unmatched in $N$ (by Lemma~\ref{prop1}).

    Conversely, suppose $N$ is a popular matching in $G_0$ that matches all  vertices in $X \cup Y$. Then there is no larger matching than $N$ in $G_0$
    and thus $N$ is a dominant matching. \qed
  \end{proof}
  
  Thus a dominant matching exists in $G_0$ if and only if there is a popular matching in $G_0$ that matches all vertices in $X \cup Y$.
  Hence it follows from Theorem~\ref{thm:redn} that the dominant matching problem
  is NP-hard even in roommates instances that admit stable matchings. Thus Theorem~\ref{second-thm} stated in Section~\ref{intro} follows.

\end{document}